\DeclareMathAlphabet{\mathcal}{OMS}{cmsy}{m}{n}
\newcolumntype{C}[1]{>{\centering\arraybackslash}p{#1}}
\mathchardef\mhyphen="2D
\newcommand\sysname{$\mathsf{ammBoost}$\xspace}
\newcommand\ammop{$\mathsf{ammOP}$\xspace}
\newcommand\sk{\mathsf{sk}\xspace}
\newcommand\pk{\mathsf{pk}\xspace}
\newcommand\vk{\mathsf{vk}\xspace}
\newcommand\ppt{\mathsf{PPT}\xspace}
\newcommand{\param}{\ensuremath{\mathtt{pp}}\xspace}
\newcommand\clients{\mathcal{C}\xspace}
\newcommand\miners{\mathcal{M}\xspace}
\newcommand\mainc{\mathsf{mc}\xspace}
\newcommand\sidec{\mathsf{sc}\xspace}
\newcommand\syssetup{\mathsf{SystemSetup}\xspace}
\newcommand\partysetup{\mathsf{PartySetup}\xspace}
\newcommand\stt{\mathsf{state}\xspace}
\newcommand\createTransaction{\mathsf{CreateTx}\xspace}
\newcommand\tx{\mathsf{tx}\xspace}
\newcommand\aux{\mathsf{aux}\xspace}
\newcommand\txtype{\mathsf{txtype}\xspace}
\newcommand\swap{\mathsf{swap}\xspace}
\newcommand\mint{\mathsf{mint}\xspace}
\newcommand\collect{\mathsf{collect}\xspace}
\newcommand\burn{\mathsf{burn}\xspace}
\newcommand\verifyTransaction{\mathsf{VerifyTx}\xspace}
\newcommand\verifyBlock{\mathsf{VerifyBlock}\xspace}
\newcommand\btype{\mathsf{btype}\xspace}
\newcommand\meta{\mathsf{meta}\xspace}
\newcommand\inn{\mathsf{in}\xspace}
\newcommand\outt{\mathsf{out}\xspace}
\newcommand\summary{\mathsf{summary}\xspace}
\newcommand\block{\mathsf{B}\xspace}
\newcommand\updateState{\mathsf{UpdateState}\xspace}
\newcommand\leader{\mathsf{leader}\xspace}
\newcommand\elect{\mathsf{Elect}\xspace}
\newcommand\com{C\xspace}
\newcommand\prune{\mathsf{Prune}\xspace}
\newcommand\tokenbank{\mathsf{TokenBank}\xspace}
\newcommand\poolsets{\mathsf{PoolSets}\xspace}
\newcommand\deposits{\mathsf{Deposits}\xspace}
\newcommand\deposit{\mathsf{Deposit}\xspace}
\newcommand\payouts{\mathsf{Payouts}\xspace}
\newcommand\uid{\mathsf{userId}\xspace}
\newcommand\positionId{\mathsf{posId}\xspace}
\newcommand\fees{\mathsf{fees}\xspace}
\newcommand\createpool{\mathsf{createPool}\xspace}
\newcommand\sync{\mathsf{Sync}\xspace}
\newcommand\type{\mathsf{type}\xspace}
\newcommand\amnt{\mathsf{amnt}\xspace}
\newcommand\lp{\mathcal{LP}\xspace}
\newcommand\led{\mathcal{L}\xspace}
\newcommand\summ{\mathsf{sum}\xspace}
\newcommand\SwapTx{\mathsf{Swap}\xspace}
\newcommand\MintTx{\mathsf{Mint}\xspace}
\newcommand\BurnTx{\mathsf{Burn}\xspace}
\newcommand\CollectTx{\mathsf{Collect}\xspace}
\newcommand\FlashTx{\mathsf{Flash}\xspace}
\newcommand\ExIn{\mathsf{ExactInput}\xspace}
\newcommand\ExOut{\mathsf{ExactOutput}\xspace}
\newcommand\SwapCallback{\mathsf{SwapCallback}\xspace}
\newcommand\addLiquidity{\mathsf{addLiquidity}\xspace}
\newcommand\decreaseLiquidity{\mathsf{decreaseLiquidity}\xspace}
\newcommand\MintCallback{\mathsf{MintCallback}\xspace}
\newcommand\PoolDep{\mathsf{PoolDeployer}\xspace}
\newcommand\PoolFac{\mathsf{PoolFactory}\xspace}
\newcommand\NFPM{\mathsf{NonfungiblePositionManager}\xspace}
\newcommand\NFTPD{\mathsf{NonfungibleTokenPositionDescriptor}\xspace}
\newcommand\SwapRouter{\mathsf{SwapRouter}\xspace}
\newcommand\positions{\mathsf{Positions}\xspace}
\newcommand\flash{\mathsf{Flash}\xspace}
\newcommand\amount{\mathsf{amnt}\xspace}
\newcommand\range{\mathsf{priceRange}\xspace}
\newcommand\ltick{\mathsf{lowerTick}\xspace}
\newcommand\utick{\mathsf{upperTick}\xspace}
\newcommand{\ifcond}[1]{\textbf{if} {#1} \textbf{then}}
\newcommand{\forloop}[1]{\textbf{for} {#1} \textbf{do}}
\newcommand{\elsecond}{\textbf{else}}
\newcommand{\pluseq}{\mathrel{+}=}
\newcommand{\mineq}{\mathrel{-}=}
\newtheorem{remark}{Remark}
\newtheorem{theorem}{Theorem}
\newtheorem{lemma}{Lemma}
\def\BibTeX{{\rm B\kern-.05em{\sc i\kern-.025em b}\kern-.08em
    T\kern-.1667em\lower.7ex\hbox{E}\kern-.125emX}}
\begin{document}

\title{ammBoost: State Growth Control for AMMs}
\author{\IEEEauthorblockN{Nicolas Michel}
\IEEEauthorblockA{\textit{University of Connecticut} \\
Storrs, CT, USA \\
nicolas.michel@uconn.edu}
\and
\IEEEauthorblockN{Mohamed E. Najd}
\IEEEauthorblockA{\textit{University of Connecticut} \\
Storrs, CT, USA \\
menajd@uconn.edu}
\and
\IEEEauthorblockN{Ghada Almashaqbeh}
\IEEEauthorblockA{\textit{University of Connecticut} \\
Storrs, CT, USA \\
ghada@uconn.edu}}

\maketitle
\pagestyle{plain}

\begin{abstract}
Automated market makers (AMMs) are a prime example of Web 3.0 applications. Their popularity and high trading activity led to serious scalability issues in terms of throughput and state size. In this paper, we address these challenges by utilizing a new sidechain architecture, building a system called \sysname. \sysname reduces the amount of on-chain transactions, boosts throughput, and supports blockchain pruning. We devise several techniques to enable layer 2 processing for AMMs, including a \emph{functionality-split} and \emph{layer 2 traffic summarization} paradigm, an \emph{epoch-based deposit mechanism}, and \emph{pool snapshot-based and delayed token-payout trading}. We also build a proof-of-concept for a Uniswap-inspired use case to empirically evaluate performance. Our experiments show that \sysname decreases the gas cost by 96.05\% and the chain growth by at least 93.42\%, and that it can support up to 500x of the daily traffic volume of Uniswap. We also compare \sysname to an Optimism-inspired solution showing a 99.94\% reduction in transaction finality.
\end{abstract}

\begin{IEEEkeywords}
AMMs, sidechains, scalability, layer-2 solution.
\end{IEEEkeywords}

\section{Introduction}
\label{intro}
Cryptocurrencies and blockchain technology provide an innovative model that ignited a movement towards the next generation of the Internet---Web 3.0. Decentralized Finance (DeFi) is a large category under Web 3.0 in which blockchains are utilized to transform traditional financial services, which are usually centrally managed, into fully decentralized ones. Many of these systems operate in an open-access model, thus removing entrance barriers for users, and enabling a transparent and intermediary-free interaction.

Automated market makers (AMMs) are considered a prime example of DeFi services~\cite{xu2023sok}. They build a platform for automated token trading by establishing liquidity pools for token pairs. An AMM is implemented as a decentralized application (dApp) composed of a set of smart contracts deployed on a smart contract-enabled blockchain---where Ethereum is the dominant choice so far. It supports operations to handle trading and liquidity management, such as swaps, mints, burns, and collects. AMMs are a huge industry with a total monthly trading volume of \$46$-$\$95 billion (during the first half of 2023), and an estimated total market cap of \$16 billion as of January 2024~\cite{forbesAMM}. Many popular AMMs are deployed in practice and widely used. Examples include Uniswap~\cite{uniswap}, Curve~\cite{curve}, DODO~\cite{dodo}, and Sushiswap~\cite{sushiswap}, which command a large portion of the AMM market share~\cite{amm-market-share}.

\textbf{Challenges.} At the same time, AMMs are a huge scalability problem. Their popularity and high trading activity led to serious efficiency problems since AMMs produce a massive number of on-chain transactions. On the one hand, this increases the underlying blockchain size, and on the second hand, it incurs large (gas) fees. This large workload does not only amplify state storage cost,\footnote{A dApp on-chain storage includes the state of the dApp's smart contract, \emph{and} the transaction history recorded on the blockchain that produced this state. A misconception about on-chain storage is counting only the contract \emph{latest} state while ignoring the permanent on-chain storage of these transactions.} but also transaction processing/confirmation delays due to the low throughput of blockchains (on Ethereum, average throughput is around 12 transaction/sec~\cite{eth-throughput}, and 62\% and 23\% of all transactions have a delay of at least 5 min and 30 min, respectively~\cite{lao2024unpacking}).  

Concretely, based on the traffic analysis that we conducted for 2023 (see Appendix~\ref{appdx:uniswap-traffic-analysis}), Uniswap V3 produced 20 million transactions on Ethereum in 2023 (transaction sizes range from $400$ bytes to $3000$ bytes). This translates to adding around $20.2$ GB to the Ethereum blockchain. In its year of deployment (Nov 2018), Uniswap V1 generated 34,000 transactions, and in 2023, Uniswap's various versions on Ethereum generated around 80 million transactions, leading to 2,353\% increase in transaction volume. These numbers indicate that the scalability problem of AMMs is amplified over the years. This does not only impact the AMM itself, but also other dApps deployed on the underlying blockchain. Such contention drives users to put in high transaction fees so that miners would prioritize their transactions.

Thus, the challenge is how to address the scalability problem of AMMs \emph{without impacting security} (e.g., without employing trusted third parties or weakening  consensus security).


\textbf{Limitations of prior work.} Existing blockchain scalability solutions target layer 1, such as sharding~\cite{luu2016secure,Danezis16,zamani2018,Kokoris18}, or layer 2, such as rollups~\cite{optimism,zksync}, that enable off-chain processing. Applying these solutions to AMMs impacts performance and security, and does not even cut storage costs. 

In sharding, localized workload division is used to reduce cross-shard transactions. For smart contract-enabled blockchains, this means that a dApp (so the whole AMM) will reside on one shard~\cite{albassam2018chainspace,tao2020sharding}, thus parallel processing among shards is not utilized. Others~\cite{pirlea2021practical} use static analysis to shard dApps by splitting them into commuting functional units that can be executed in any order, and assigning each unit to a shard. This cannot work for AMMs since their (per pool) operation is sequential. Distributing liquidity pools among shards~\cite{pourpouneh2023automated} relies on locked cross-shard transactions (to support multi-swaps and arbitraging), which may degrade the AMM performance. 

\begin{table*}[t] 
\centering
    \resizebox{\textwidth}{!}{
\begin{threeparttable}
    \caption{Comparison between \sysname and rollup solutions.}
    \label{tab:comparison}

    \begin{tabular}{|l | c | c | c| c | c | c| c|}
       \hline
       Solution\tnote{*}  & Type &  Throughput & Token Payout Delay &  Liquidity Withdrawal Overhead & Decentralized & Mainchain Storage \\
       \hline
       Uniswap Optimism~\cite{uniswapUniswapOptimism, optimismWithdrawalFlow} & Optimistic Rollup & 0.6 tx/s & 7 days & 4 tx (including Burn)  & No & Batch-txn transcript \\
        \hline
       Unichain~\cite{unichain, unichainSubmittingTransactions}  &  Optimistic Rollup &   1.92 tx/s & 7 days & 4 tx (including Burn) & Yes  & Batch-txn transcript  \\
        \hline
       ZKSwap~\cite{zkswap, zknationZIP4Reduce, zksyncWelcomeDocs, zksyncBridgingZKsync}   & ZK-rollup & 8 - 25 tx/s & 3-24 hrs & 2-3 tx (including Burn) & No & State changes \\
       \hline
       \sysname & Sidechain & 138.06 tx/s & 346.49 s & 1 (Burn) tx & Yes & State changes\\
       \hline
    \end{tabular}
    \begin{tablenotes}
    \item[*] \footnotesize{For Unichain, average throughput reported for its deployment~\cite{uniscan}. ZKSwap is deployed on top of ZKSync-Era~\cite{zksync}, so reported metrics are for the latter~\cite{zksync-era-finality, zksyncWelcomeDocs}.} 
    \end{tablenotes}

\end{threeparttable}
    } 
\end{table*}

Layer 2 solutions that allow computations, i.e., beyond just currency transfer as in payment networks~\cite{Decker15}, also have limitations. Optimistic rollups have long contestation periods that may reach one week as in Optimism and Arbitrum~\cite{optimism,arbitrum}. Thus, a user has to wait until the end of this period to ensure that submitted results are valid, and withdraw tokens accrued from trades~\cite{uniswapUniswapOptimism,amm-rollup-two}. Moreover, they have security issues; verifiers (who validate submitted results during contestation) could be centralized trusted entities~\cite{rosca2023security}, while decentralized verifier networks have incentive compatibility issues, i.e., configuring a proper incentive model that encourage verifies to vet the results is still an open question~\cite{li2023security,mediumCheaterChecking,mediumOptimisticRollup}, which may lead to adopting incorrect ledger state changes. Also, optimistic rollups are vulnerable to fraud proof denial attacks causing dispute transactions to be delayed, and then rejected, as the contestation period is over~\cite{koegl2023attacks}. 

Zero-knowledge (ZK)-rollups~\cite{starkware,Bowe20,bonneau2020coda} are costly; proof generation is computationally heavy (a prover in zksync era can cost 1365 USD per month~\cite{chaliasos2024analyzing}) and it becomes worse when attesting to complex transactions, and many require a trusted setup. They also may have a long transaction confirmation delay that may reach up to 24 hours as in zkSync Era~\cite{zksync} (with a recent upgrade cutting that to 3 hrs~\cite{zknationZIP4Reduce}). Such delays force users to wait longer for their transactions to be finalized, which negatively impact the operation of AMMs. Some solutions employ a centralized settlement party for trade matching with ZK proofs to prove settlement correctness~\cite{khalil2019tex}, so they are a form of centralized ZK-rollups. 

Another instance of layer 2 solutions is sidechains~\cite{Back14,Garoffolo20,Gavzi19,Kiayias19,Poon17,cosmos,Wood16,btcrelay,rootstock,xdai}. Existing efforts mostly focus on two-way peg, i.e., currency transfer between the chains, and only consider independent sidechains. That is, each chain has its own transactions, miners, and tokens. This narrow focus limits the performance gains that sidechains can achieve, and prevents workload sharing between the chains. Cosmos~\cite{cosmos} allows some form of data exchange using data duplicates, thus amplifying storage costs. For AMMs, Polygon~\cite{polygon} operates an EVM-compatible sidechain hosting the whole AMM, so it is separated from the mainchain, where tokens can be transferred to Ethereum using bridges. Isolating an AMM on a separate blockchain impacts composability with other dApps, and limits interaction with the mainchain to merely currency transfers. Also, it just moves the on-chain storage cost to the sidechain, which will have scalability issues on its own as the AMM user population grows. 

In terms of blockchain pruning, pruning for sharding~\cite{Kokoris18,tian2024slchain} targets the UTXO model; it is not for the account model underlying smart contract-enabled blockchains. Also,~\cite{Kokoris18} moves the responsibility of storing transactions (to prove their existence) to the clients. Snapshot-based pruning solution~\cite{palm2018selective,song2022block,reddy2021secureprune,ethereumPruningGoethereum} either target the UTXO model~\cite{reddy2021secureprune,song2022block}, support only permissioned blockchains~\cite{palm2018selective}, or require archival nodes that store the full blockchain as in Bitcoin and Ethereum~\cite{bitcoin-file-pruning,bitcoin-archival-nodes,ethereumPruningGoethereum}. Optimistic rollups do not prune since the processed transaction batch is stored on-chain~\cite{optimismDerivationStack,arbitrumDataAvailability}, or on a third-party data availability layer~\cite{celestiaIntroductionArbitrum, astriaDataAvailability}, to allow verifiers to verify the submitted results~\cite{chaliasos2024analyzing, optimismDerivationStack}. ZK-rollups either store the batch on-chain~\cite{polygonArchitecturePolygon} or avoid that at the expense of more expensive ZK proofs~\cite{chaliasos2024analyzing, zksyncDataAvailability}. For sidechains, none of the existing solutions support pruning.

We present a comparison with different layer 2 scalability solutions in Table~\ref{tab:comparison}.

\textbf{A new approach.} Existing solutions around sidechains, due to their narrow focus, fall short in realizing the full potential of sidechains in building an effective layer 2 scalability solution. This has been observed in~\cite{chainboost-paper}, who proposed a framework called chainBoost with a new sidechain architecture that has a mutual-dependence relation with the mainchain, thus permitting workload sharing, arbitrary data exchange, and blockchain pruning. chainBoost targets resource markets---Web 3.0 systems that offer decentralized digital services~\cite{filecoin,livepeer}. It directs all heavy/frequent service-related traffic to the sidechain, which in turn processes this traffic and produces summaries of the state changes that are used to sync the mainchain. Once these summaries are confirmed on the mainchain, the temporary blocks containing the actual transactions on the sidechain are pruned. The empirical results in~\cite{chainboost-paper} show substantial performance gains in terms of blockchain size, confirmation delays, and throughput. All of these are achieved without compromising security and while keeping the mainchain as the single truth of the system state.

These advantages motivated us to ask \emph{whether we can utilize dependent-sidechains to control the state growth of AMMs, and boost their throughput, in a secure and low-overhead way without isolating the AMM on a separate blockchain.}

\subsection{Our Contributions} 
We answer this question in the affirmative and propose \sysname; a secure storage control and throughput boosting solution for AMMs. We make the following contributions. 

\textbf{System design.} \sysname introduces a novel approach for dividing the AMM functionality into two modules: one that resides on the mainchain and another that is operated by the sidechain. In particular, \sysname offloads processing most transactions (swaps, mints, collects, and burns) to the sidechain, and minimizes the functionality remaining on the mainchain. The latter is encapsulated in a base smart contract, called $\tokenbank$, that manages the actual tokens by tracking only the transaction summaries produced by the sidechain. It also includes all operations that must happen in real-time on the mainchain, such as flash loans.

\sysname solves several challenges related to combining dependent sidechains with AMMs. The chainBoost framework assumes a mutual-dependency relation between the main and side chains; both are aware of each other and operate in the same domain, i.e., same transaction format, protocol, and miners. This is not the case for \sysname; the AMM is merely a dApp deployed at the application layer, so applying chainBoost as is requires modifying the mainchain protocol which leads to hard forks. However, \sysname's sidechain is impacted by the mainchain since the tokens and the AMM state are on the mainchain. This means that \sysname exhibits a unidirectional dependency relation. Furthermore, the sidechain does not hold custody of actual tokens, these reside on the mainchain, but it has to process transactions correctly and accept only these for which issuing users own tokens on the mainchain. Moreover, since the mainchain miners in \sysname do not track the sidechain, sync-transaction authentication is needed so that only syncs issued by the rightful epoch committee in \sysname are accepted by $\tokenbank$.

We resolve these challenges by introducing several techniques. First, we require the AMM to have its \emph{own miners} to maintain the sidechain. So, like any secure blockchain, these miners have a mining power (with honest majority), and they will be rewarded for maintaining the sidechain using, e.g., AMM's native token. This allows enjoying the scalability benefits of \sysname without introducing core changes to the mainchain protocol that lead to hard forks. Second, to ensure that users owns actual tokens for trading, we introduce \emph{epoch-based deposits}, where a user deposits on the mainchain the anticipated amount of tokens needed to back up her issued transactions during an epoch on the sidechain. 

Third, to allows trading on the sidechain without token custody, we introduce \emph{pool snapshot-based and delayed token-payout trading}. That is, the pool token balances are retrieved from the mainchain at the beginning of the epoch, which are used to compute trade prices processed on the sidechain. These balances evolve during the epoch based on the transactions processed. Users can directly use newly accrued tokens in trading since all balances are tracked, but they cannot withdraw actual tokens until the epoch ends. This is because the sidechain does not hold actual tokens; token payouts and deposit leftover refunds happen when $\tokenbank$ is synced. Fourth, to ensure that only legit sync-transactions accepted on the mainchain, we introduce a \emph{sync authentication mechanism} that combines quorum certificates and threshold signatures ensuring that only syncs issued by the rightful sidechain committees are accepted by $\tokenbank$. 

A large advantage of \sysname's approach to sidechains is maintaining \emph{dApp composability}. That is, the base smart contract on the mainchain maintains the application state. Thus, interactions with other dApps happen on the mainchain, e.g., as in the original Ethereum setting.

\textbf{Security analysis.} We analyze the security of \sysname showing that it is a secure scalability solution as it preserves the correct and secure operation of the underlying AMM. 

\textbf{Implementation and evaluation.} We also build a proof-of-concept implementation for a Uniswap-inspired use case, and conduct experiments to empirically evaluate the performance gains, in terms of blockchain size, confirmation delay, and throughput, that \sysname can achieve. Our experiments show that \sysname achieves a 96.05\% gas cost reduction and 93.42\% chain growth reduction (when compared to a Uniswap version deployed on the Sepolia testnet). Our experiments demonstrate that \sysname can support large traffic volumes, on the order of up to 500x of Uniswap's daily transaction volume. We also study impact of various configuration parameters on the performance gains of \sysname, and compare it to an Optimism-inspired solution showing a 99.94\% reduction in transaction finality.

Lastly, we believe that \sysname's paradigm can enable more optimizations for AMMs, e.g., integration of privacy-preserving techniques. Also, \sysname could be beneficial for other DeFi applications, and dApps in general, as it introduces a framework for operating application-specific sidechains interacting with smart contract-enabled blockchains. We leave exploring such directions as part of our future work.

\section{Background and Challenges}
\label{sec:background}
We provide an overview of AMMs and the chainBoost framework that we use in the design of \sysname.\medskip

\noindent\textbf{Automated market makers (AMMs).}
AMMs build platforms for token trading powered by the users themselves. This is done by establishing liquidity pools such that a pool trades a pair of tokens, say tokens $A$ and $B$. Users are divided into: clients---token sellers and buyers, and liquidity providers (LPs). Providing liquidity comes from the sellers since buying token $A$ requires paying the price in token $B$ (and vice versa), and from LPs who add tokens to the pool and collect fees in return. Constant function market makers (CFMM) is a popular choice for computing the trading price in AMMs. This formula keeps the ratio of token reserves, and consequently prices, in the pool as balanced as possible to reduce price slippage.

At a basic level, an AMM supports several transaction types: for trading, there are (exact input and exact output) swaps, and for liquidity management, there are mints, burns and collects that allow LPs to submit liquidity positions, collect their fees, and withdraw these positions, respectively. A liquidity position is a data structure in the AMM smart contract containing a position ID, the ID (e.g., a public key) of the owner, the amount of liquidity tokens the position owner provided, and the total amount of fees accrued so far. Collects are simple currency transfer transactions in which a position owner requests to withdraw (all or part of) the fee balance in their position.  While burns allow a position owner to withdraw all liquidity balance and fees of their position and deleting the position. AMMs may provide additional services, such as flash loans~\cite{wang2021towards} allowing clients to take advantage of arbitrage opportunities across different platforms. Furthermore, more sophisticated liquidity approaches are being adopted, e.g., concentrated liquidity~\cite{fritsch2021concentrated} that enables defining a price range over which liquidity will be applied to address issues related to inefficient use of provided funds.

The functionality of an AMM is commonly implemented as a set of smart contracts on top of a smart contract-enabled blockchain, where Ethereum is the dominant choice so far. These contracts create and manage the liquidity pools, and provide the API needed to interact with the AMM. AMMs on public blockchains have several financial and security issues~\cite{bartoletti2022maximizing,park2021conceptual,qin2022quantifying}, e.g., front-running attacks, sandwich attacks, miner/maximal extractable value, etc. Understanding and solving these issues are active research areas. We do not discuss these issues further since they are not the focus of this work; we target the storage cost and throughput of AMMs.\medskip

\noindent\textbf{The chainBoost framework.}
chainBoost~\cite{chainboost-paper} is a sidechain-based solution that reduces blockchain size and confirmation delays, and boost transaction throughput. It introduces a new sidechain architecture that has a mutual-dependence relation with the mainchain, which allows sharing the workload between the two chains, and supports pruning stale records. In particular, all service-related transactions that can be summarized go to the sidechain, while the rest stay on the mainchain. The sidechain works in parallel to the mainchain, and operates in epochs and rounds (an epoch is $\omega$ consecutive rounds and a round is the period during which a new block is mined). At the end of each epoch, the mainchain is synced with summaries of the workload processed by the sidechain in that epoch.

The sidechain is managed by the mainchain miners, where for each epoch, a committee is elected to process the sidechain traffic. The rest of the mainchain miners, who are not on the committee, do not process the sidechain traffic, thus reducing their load. To speedup agreement, chainBoost employs a secure practical Byzantine fault tolerance (PBFT)-based consensus (similar to those in~\cite{Kogias16,Gilad17}) for the sidechain.

The sidechain is composed of two types of blocks: temporary meta-blocks and permanent summary-blocks. For each sidechain round, the committee mines a meta-block containing the transactions they processed, so that once a transaction appears in a meta-block it is considered final. In the last round of the epoch, this committee mines a summary-block summarizing all state changes induced by the meta-blocks within that epoch. After that, it issues a sync-transaction containing the summarized state changes, which the mainchain miners use to update the relevant state variables on the mainchain. Once the sync-transaction is confirmed on the mainchain, all meta-blocks used to produce the respective summary-block are discarded. This significantly reduces the sidechain size, and subsequently, the mainchain size. At the same time, having permanent summary-blocks enables anyone to verify the source of the state changes recorded on the mainchain.\medskip

\noindent\textbf{Challenges of using dependent-sidechains for AMMs.} AMMs' setting is different from the one in~\cite{chainboost-paper}: First, in chainBoost the two chains are mutually-dependent and aware of each other. In \sysname, mainchain miners belong to a smart contract-enabled blockchain, and the AMM is simply an application deployed on that blockchain. Modifying the mainchain protocol to have its miners operate the sidechain would lead to a hard fork. To avoid that, the sidechain needs its own miners to run its protocol (indeed a miner can operate on both chains by running two protocols---one for each chain).  

Second, in \sysname, actual tokens reside on the mainchain, while trading activities happen on the sidechain. However, in AMMs, a user who want to place a trade or add liquidity, must prove that they own actual tokens covering the activity by transferring the tokens to the AMM contract before the transaction is processed. Thus, a mechanism is needed to provide guarantees that users own actual tokens to backup transaction they issue on the sidechain. At the same time, transaction execution must be based off the pool balances from the mainchain, and a mechanism is needed to handle actual token payouts on the mainchain resulting from trading activities that took a place on the sidechain.

Second, the mainchain miners in \sysname are not required to track the sidechain, and even they may not be aware of its existence. Nonetheless, $\tokenbank$ must only accept sync-transactions coming from the rightful sidechain epoch committees. Thus, \sysname must implement an authentication mechanism to ensure that only legit syncs are accepted. Devising techniques to address these issues resemble the core novelty of \sysname system design.

\section{Preliminaries}
\label{sec:prelim}
\noindent\textbf{Notation.} We use $\lambda$ to denote the security parameter, $\param$ to denote the system public parameters, $\led$ to denote a ledger (or blockchain), $\led_{\mainc}$ to denote the mainchain ledger, and $\led_{\sidec}$ to denote the sidechain ledger. $\led_{\mainc}$ is the smart contract-enabled blockchain on top of which the AMM base smart contract is deployed, while $\led_{\sidec}$ is the sidechain of the AMM ecosystem. Each party maintains a secret key $\sk$ and a public key $\pk$. Lastly, we use $\ppt$ for probabilistic polynomial time.

\vspace{3pt}
\noindent\textbf{System model.} 
\sysname involves a base smart contract representing the AMM on the mainchain, and a sidechain that processes most of the AMM workload. Anyone can join/leave the AMM at anytime, and these parties are known using their public keys. Participants are three types: clients $\clients$ who only use the AMM trading services, liquidity providers $\lp$ who provide liquidity for the pools operated by the AMM, and miners $\miners$ who maintain the AMM sidechain. We do not place any restrictions on the mainchain beyond being a secure smart contract-enabled blockchain. \sysname operates in rounds and epochs (as defined earlier). The sidechain runs a secure PBFT-based consensus, e.g.,~\cite{Kogias16,Gilad17}, in which a committee is elected for each epoch to run the agreement (we provide a background on PBFT and committee election in Appendix~\ref{apdx:pbft-review}).\footnote{To simplify the presentation, we adopt a leader-based PBFT in which a leader proposes a block for the committee to agree on (as in~\cite{Kogias16}). Nonetheless, voting-based PBFT (as in~\cite{Gilad17}) can be used.} In any epoch, the committee mines new blocks: temporary meta-blocks that record transactions, and permanent summary-blocks that summarize meta-blocks mined in an epoch. The committee also issues sync-transactions to sync the base AMM smart contract deployed on the mainchain. Accordingly, the \sysname framework provides the following functionalities:

\begin{description}
    \item[$\syssetup(1^{\lambda}, \led_{\mainc}) \rightarrow (\param, \led_{\sidec}^0)$:] Takes as input $\lambda$ and $\led_{\mainc}$. It configures the public parameters $\param$, and deploys a base AMM smart contract on $\led_{\mainc}$. It outputs $\param$ and the initial sidechain ledger $\led_{\sidec}^0$ (the genesis block referencing the mainchain block containing the base contract).

    \item[$\partysetup(\param) \rightarrow (\stt)$:] Takes as input $\param$ and outputs the party's initial local state $\stt$, which contains a keypair $(\sk, \pk)$, and in case of miners, the current view of $\led_{\sidec}$.

    \item[$\createTransaction(\txtype, \aux) \rightarrow (\tx)$:] Takes the transaction type $\txtype$ and any additional information $\aux$ as inputs, and outputs a transaction $\tx$ of one of the following types:
    \begin{itemize}
        \item $\tx_{\deposit}$: Allows a user to deposit funds on the mainchain to support their activities on the sidechain.
        \item $\tx_{\swap}$: Allows a client to submit a trade.
        \item $\tx_{\mint}$: Allows an LP to provide liquidity to a pool.
        \item $\tx_{\collect}$: Allows an LP to collect their fees.
        \item $\tx_{\burn}$: Allows an LP to withdraw their liquidity.
        \item $\tx_{\sync}$: Allows a sidechain committee to sync the AMM base contract that resides on the mainchain.
    \end{itemize}

    \item[$\verifyTransaction(\tx) \rightarrow (0/1)$:] Takes as input a transaction $\tx$, and outputs 1 if $\tx$ is valid based on the syntax/semantics of its type, and 0 otherwise.

    \item[$\verifyBlock(\led_{\sidec}, \block_{\btype}) \rightarrow (0/1)$:] Takes as input the sidechain state $\led_{\sidec}$, a new block $\block$ with type $\btype = \meta$ or $\btype = \summary$. It outputs 1 if $\block$ is valid based on the syntax/semantics of the block type, and 0 otherwise.

    \item[$\updateState(\led_{\sidec}, \aux, \btype) \rightarrow (\led_{\sidec}')$:] Takes as input the current sidechain state $\led_{\sidec}$, and a set of pending transactions $\aux = \{\tx_i\}$ (if $\btype = \meta$) or $\bot$ (if $\btype = \summary$ since the inputs are the last epoch meta-blocks from $\led_{\sidec}$). It reflects the changes induced by $\aux$ and outputs a new state $\led_{\sidec}'$.

    \item[$\elect(\led_{\sidec}) \rightarrow (\com, \leader)$:] Takes as input the current state of the sidechain ledger $\led_{\sidec}$, and outputs an epoch committee $\com$ and its leader $\leader$.

    \item[$\prune(\led_{\sidec}) \rightarrow (\led_{\sidec}')$:] Takes as input the current sidechain state $\led_{\sidec}$, and produces an updated state $\led_{\sidec}'$ in which all stale meta-blocks are dropped.
\end{description}

Note that $\updateState$ is the process of mining a new block on the sidechain based on its consensus protocol.   

\vspace{3pt}
\noindent\textbf{Security model.}
We aim to develop a secure state growth control solution that preserves AMM correctness and security. \sysname builds a sidechain, which is a blockchain, that interacts with the application layer of the mainchain through the base AMM smart contract. This sidechain must be a secure ledger that satisfies safety and liveness as defined in~\cite{garay2015bitcoin}.

A secure blockchain must record only valid transactions and blocks, thus its protocol is parameterized by predicates to verify transactions and blocks. For dApps, validity is governed by the code of their smart contracts, and miners ensure that the ledger state changes have been produced by a successful execution of this code. \sysname reduces the AMM functionality deployed on the mainchain, and it processes most of the workload (following the same logic of the AMM) on the sidechain. Thus, in our security analysis, we show that \sysname preserves the security and correct operation (i.e., safety and liveness) of the original AMM. 

\vspace{3pt}
\noindent\textbf{Adversary model.}
We assume a secure mainchain as defined above. For the sidechain, we have honest miners who follow the protocol, and malicious miners controlled by the adversary who may behave arbitrarily. The adversary can deploy new miners or corrupt existing ones, without going above the threshold of faulty nodes required by the sidechain consensus protocol. As we use a PBFT-based consensus for the sidechain, it follows the threshold assumption for PBFT~\cite{Kogias16}; the adversary can corrupt up to $f$ parties, where a committee size is $3f+2$ and $2f+2$ votes are needed to reach an agreement. The adversary can see all messages and transactions sent in the system (since we deal with public permissionless blockchain systems) and can reorder these messages and delay them. As in~\cite{Kokoris18,Kogias16,garay2015bitcoin,pass2017fruitchains,pass2017analysis,abraham2020sync}, we assume bounded-delay message delivery, so any sent message (or transaction) will be delivered within $\Delta$ time. We assume slowly-adaptive adversaries~\cite{avarikioti2019divide} that can corrupt miners only at the beginning of each epoch. Lastly, we deal with $\ppt$ adversaries.

\section{System Design}
\label{sec:design}
\sysname changes the AMM deployment structure, as shown in Figure~\ref{ammboost}. The base smart contract is minimal; mainly tracking token balances of users and liquidity pools, while most transaction processing happen on the sidechain. Summaries of the sidechain processed traffic are used to sync the base smart contract. In this section, we present the design of \sysname including system setup, architecture and operation, handling interruptions, and its security.

\begin{figure*}[t!]
\centerline{
\includegraphics[height= 2.0 in, width = 1.8\columnwidth]{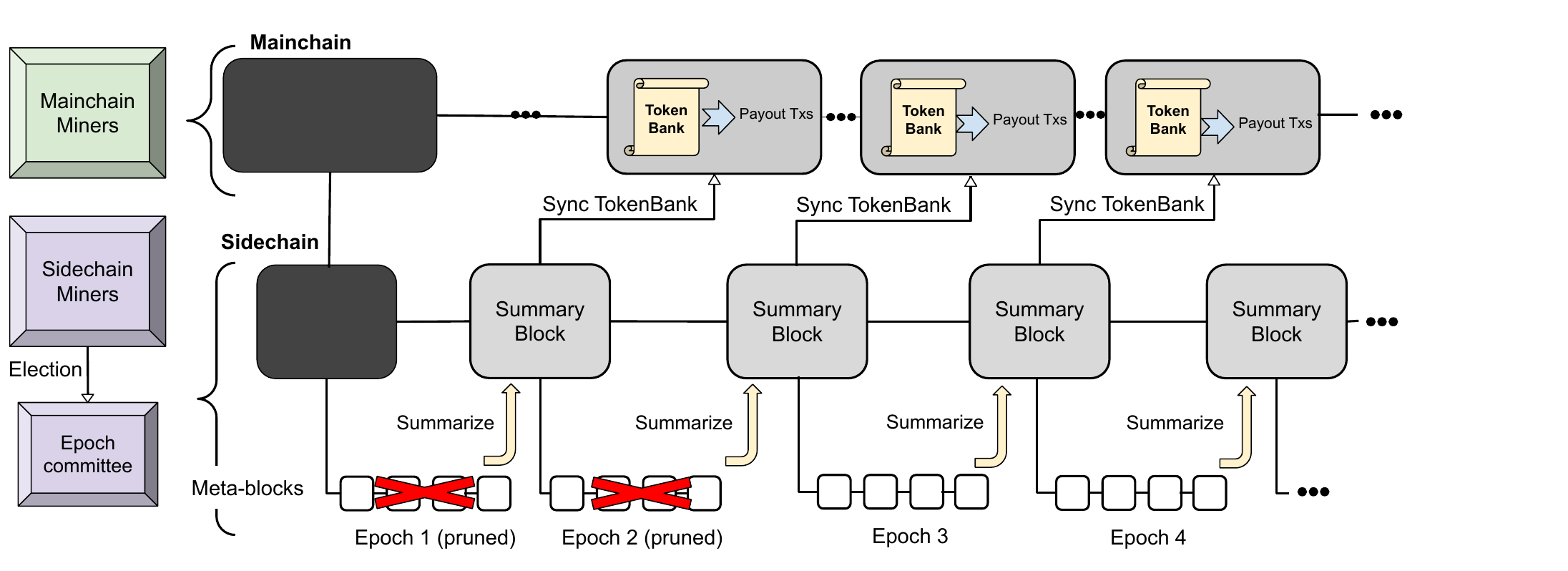}}
\caption{The \sysname framework (Txs stands for transactions).}
\label{ammboost}
\end{figure*}

\subsection{System Setup}
\label{subsec:setup}
The setup phase, see Figure~\ref{fig:syssetup}, mainly specifies the traffic split between the chains and the summary rules for the sidechain traffic, as well as the sidechain parameters such as the epoch length\footnote{The epoch duration impacts syncing frequency. Short epochs mean more sync-transactions, which incurs more gas cost and may impact throughput---as they are processed by the mainchain, however users would receive their tokens faster compared to long epochs. We empirically study the impact of this parameter in Appendix~\ref{appdx:more-eval}.} and its consensus configuration (e.g., committee size). Also, this phase involves deploying the AMM base smart contract on the mainchain and creating the sidechain.\medskip

\begin{figure}[t!]
\begin{framed}
\vspace{-3pt}
\begin{flushleft}
\justifying
\small{
$\syssetup(1^{\lambda}, \led_{\mainc})$: Takes as input $\lambda$ and the current mainchain state $\led_{\mainc}$, and does the following: 

 \vspace{3pt}
 1. Generate the sidechain configuration parameters:
 \begin{itemize}
     \item The epoch length $\omega$.   
     \item All parameters of the sidechain consensus protocol.
     \item Traffic classification rules.
     \item Summary rules and state variables.
 \end{itemize}
 
 2. Deploy $\tokenbank$ on the mainchain.
 
 \vspace{5pt}
\noindent Outputs: epoch length $\omega$, sidechain genesis block $\led_{\sidec}^0$ (that references the block in the updated state $\led_{\mainc}'$ containing $\tokenbank$), and the address of $\tokenbank$.}
\end{flushleft}
\vspace{-10pt}
\end{framed}
\vspace{-6pt}
\caption{System setup.} 
\vspace{-6pt}
\label{fig:syssetup}
\end{figure}

\noindent\textbf{Traffic classification and summary rules.}
AMM transactions and operations are divided into two groups: pool management and trading. Creating and managing token pools, as well as dispensing tokens to clients and LPs, are done on the mainchain since these deal with actual tokens. Flashes are also handled by the mainchain since they require instant token dispensing rather than at the end of the epoch. The rest of the transactions, including swaps, mints, burns, and collects, are handled by the sidechain. In \sysname, the sidechain does not hold custody of tokens, it only tracks their balances based on the processed transactions. Thus, during each epoch, the sidechain produces two structures:
\begin{itemize}
    \item A \emph{payout} list containing users' public keys and the amount/type of tokens they should receive, which is simply the updated deposit balance at the end of an epoch.
    \item A \emph{liquidity position} list containing the position IDs, the public keys of their owners, balances, and any additional information needed by the liquidity management techniques, e.g., price ranges as in concentrated liquidity.
\end{itemize}

The actual token dispensing and deduction happen at the end of an epoch when the sidechain summaries are received. The new state of the pool token balances on the mainchain will be computed based on these lists. Also, the payout encompasses refunding any leftover in the deposits to their owners. In Section~\ref{subsec:operation}, we show the summary rules for each transaction type and how they contribute to the payout and payin lists.\medskip

\noindent\textbf{Base smart contract $\tokenbank$.}
The mainchain part of the AMM is a base smart contract called $\tokenbank$. At an abstract level, as shown in Figure~\ref{fig:tokenbank}, this contract supports creating and managing token pools (i.e., tracking their balances and liquidity positions). It also provides the minimal interface needed to support users' activities on the sidechain, which is mainly creating deposits containing the tokens they want to trade or provide as liquidity. This is needed to guarantee that users own tokens since the sidechain does not hold actual tokens, it only tracks balance evolution. Hence, a user deposits the total amount of tokens they would need during an epoch before this epoch starts, and $\tokenbank$ handles the payouts and payins produced by the sidechain when the epoch ends.

As shown in Figure~\ref{fig:syssetup}, system designers deploy $\tokenbank$ on the mainchain. Once the mainchain block containing this contract is confirmed, the genesis block of the sidechain $\led_{\sidec}^0$ can be created such that it references this mainchain block.\medskip

\begin{figure}[t!]
\begin{framed}
\vspace{-3pt}
\begin{flushleft}
\justifying
\small{
 // ** State variables **
 
\noindent $\poolsets$: token-pair pools managed by the AMM.

 \vspace{3pt}
\noindent $\deposits$: a map of users' public keys and the type/amount of tokens they deposited.

\vspace{3pt}
\noindent $\positions$: a map of users' public keys and the liquidity positions they own.

 \vspace{5pt}
\noindent // ** Functions **
 
\noindent $\createpool(A, B)$: initializes a pool for the token pair ($A$, $B$).

 \vspace{3pt}
\noindent $\deposit(\type, \amnt)$: allows a user to deposit an amount $\amnt$ of token with type $\type$ for the next epoch.

 \vspace{3pt}
\noindent $\sync(\aux)$: Sync the mainchain AMM state with the sidechain epoch summaries. The input $\aux$ contains the updated pool balances and liquidity positions, and the payin/payout lists.

 \vspace{3pt}
\noindent $\flash(\aux)$: Receive a flash loan request where $\aux$ contains all required inputs, then calculate the token amount the pool can provide and initiate the callback process (more details can be found in Section~\ref{subsec:operation}). }
\end{flushleft}
\vspace{-10pt}
\end{framed}
\vspace{-6pt}
\caption{$\tokenbank$ abstract functionality.} 
\vspace{-12pt}
\label{fig:tokenbank}
\end{figure}

\noindent\textbf{Sidechain management.}
Different from chainBoost, the sidechain in \sysname is managed by its own miner population. Like any blockchain, anyone who can join as a miner given that they have mining power (e.g., a stake) to establish Sybil-resistant identities. By doing so, deploying \sysname does not introduce core changes to the mainchain protocol (required if mainchain miners are to manage the sidechain), thus avoiding the consequences of having hard forks.  

At the beginning of each epoch, a committee from these miners is elected (e.g., using random sortition~\cite{Gilad17,Kokoris18}), which runs a PBFT consensus protocol to agree on mining meta/summary blocks and issuing sync-transactions. That is, the committee leader proposes new blocks or sync-transactions, and collect votes from the committee members. Once a vote majority is reached, the new block is added to the sidechain or the sync-transaction is sent to the mainchain. In \sysname, a sync-transaction is basically a call to the function $\sync$ in $\tokenbank$ shown in Figure~\ref{fig:tokenbank}. Executing the $\tokenbank$ contract is like executing any other contract deployed at the application layer of the mainchain.

\sysname does not place any restrictions on the consensus protocol of the underlying mainchain (beyond being a secure one), and the mainchain consensus protocol could be different from the sidechain one. Also, for \sysname's sidechain, any secure PBFT-based consensus and any secure committee election algorithm can be used. Moreover, the committee size has a lower bound that guarantees the honesty threshold required by consensus (formal analysis can be found in~\cite{chainboost-paper}).

\subsection{System Operation---Transaction Processing}
\label{subsec:operation}
\sysname operates in epochs and rounds. The sidechain committee begins the epoch by retrieving the latest state, i.e., pool token balances, liquidity positions, and user deposits from the mainchain. It then processes all valid sidechain transactions, including swap, mint, burn, and collect (so users send these transactions to the sidechain). These are packaged into meta-blocks such that a meta-block is mined in each round. In the last round of the epoch, this committee produces a summary-block capturing the payouts for participating users, and any changes on liquidity positions, where the updated liquidity pool balances will be computed based on these lists. After that, it invokes the $\sync$ function in $\tokenbank$ that resembles submitting a sync-transaction to update the AMM state on the mainchain, which is the state of $\tokenbank$. 

In this section, we describe how the various transactions are processed and summarized (Figure~\ref{fig:summary-rules} captures how the sidechain workload is summarized in \sysname). Before that, we want to point out that although a user will obtain her newly claimed tokens at the end of an epoch, still they can use these tokens immediately during an epoch for trading. This is because new tokens will be added to the user deposit balance, and those that were used are deducted from deposit. Thus, the latest deposit state includes the newly accrued tokens, and it can be used for any trading activity by the user.\medskip

\noindent\textbf{Swaps.}
A swap transaction is a trade between the two tokens managed by a liquidity pool. A client provides an input of tokens and receives an output based on the price derived from the pool token balances and any user-defined trade conditions.

\begin{figure}[t!]
\begin{framed}
\vspace{-3pt}
\begin{flushleft}
\justifying
\small{
Input: meta-blocks $\block^1_\meta, \dots, \block^n_\meta$ from an epoch and $\deposits$ (the latter is the one retrieved from the mainchain at the beginning of the epoch).

\vspace{3pt}
\noindent Initialize: summary structures $\summ_\payouts$ and $\summ_\positions$.

\vspace{3pt}
\noindent \forloop{$i \in \{1, \dots, n\}$ and every $\tx \in \block^i_\meta$}

\ifcond{ $\tx.\txtype = \tx_{\swap}$ }

\;\;\; $\deposits[\tx.\uid].\amount[\inn.\type] \mineq \tx.\amount_{\inn} $

\;\;\; $\deposits[\tx.\uid].\amount[\outt.\type] \pluseq \tx.\amount_{\outt} $

\;\;\; Update fees in $\summ_\positions$ for all positions used to fill $\tx$

\;\;\;\;// Liquidity amounts are computed as explained under 

\;\;\;\;// mints and burns.

\elsecond \ifcond{ $\tx.\txtype = \tx_{\mint}$ }

\;\;\; $\summ_\positions[\tx.\positionId].\amount_A \pluseq \tx.\amount_A $

\;\;\; $\summ_\positions[\tx.\positionId ].\amount_B \pluseq \tx.\amount_B $

\;\;\; $\summ_\positions[\tx.\positionId ].\range = $

\quad\quad\quad\quad\quad\quad\quad\quad\quad\quad $(\tx.\ltick,\tx.\utick) $

\;\;\; $\deposits[\tx.\uid].\amount_A \mineq \tx.\amount_A $

\;\;\; $\deposits[\tx.\uid].\amount_B \pluseq \tx.\amount_B $

\elsecond \ifcond{ $\tx.\txtype = \tx_{\burn}$ }

\;\;\; $\summ_\positions[ \tx.\positionId].\amount_A  \mineq \tx.\amount_A $

\;\;\; $\summ_\positions[\tx.\positionId ].\amount_B \mineq \tx.\amount_B $

\;\;\; $\deposits[\tx.\uid].\amount_A \pluseq \tx.\amount_A $

\;\;\; $\deposits[\tx.\uid].\amount_B \pluseq \tx.\amount_B $

\elsecond \ifcond{ $\tx.\txtype =\tx_{\collect}$ }

\;\;\; $\summ_\positions[\tx.\positionId].\fees_A  \mineq \tx.\amount_A $

\;\;\; $\summ_\positions[\tx.\positionId].\fees_B  \mineq \tx.\amount_B $

\;\;\; $\deposits[\tx.\uid].\amount_A \pluseq \tx.\amount_A $

\;\;\; $\deposits[\tx.\uid].\amount_B \pluseq \tx.\amount_B$

\noindent Output $\summ_\payouts = \deposits$,  and  $\summ_\positions$}
\end{flushleft}
\vspace{-8pt}
\end{framed}
\vspace{-6pt}
\caption{Summary rules ($\uid$ is the user ID and $\positionId$ is the liquidity position ID). Updated liquidity pool balances are computed by TokenBank (as part of processing $\sync$) based on the updated liquidity position and payout lists.} 
\vspace{-6pt}
\label{fig:summary-rules}
\end{figure}

In order to execute a swap transaction, a user's deposit must cover the input token amount. An \emph{exact input swap} transaction contains: the type and amount of input tokens to be traded, the minimum amount of output tokens the trade will accept (as a protection against slippage), a price limit that the trade should not exceed, and a deadline which is a round number after which the trade becomes invalid if not executed by that time. The recipient of the traded tokens is the issuer of the swap (this can be extended to support stating an explicit recipient other than the issuer). For an \emph{exact output swap}, the goal is no longer to trade the exact amount of input tokens for the maximum amount of output tokens, but rather to minimize the amount of input tokens required to trade for the desired exact output. As such, the arguments of the function naturally change to reflect that, with the minimum output slippage protection changing to a maximum input slippage protection. 

\emph{Processing.} This is done using the original AMM logic for price balancing and output calculation. That is, \sysname does not change the logic based on which an AMM operates, it just migrates that to the sidechain (this applies to the rest of the transactions as well). For an exact input swap, the sidechain committee computes the maximum amount of output tokens the user will receive for all of the input tokens provided. While for an exact output swap, the committee computes the minimum amount of input tokens needed to purchase the defined output. In both cases, these computations are based off the updated pool balance on the sidechain. In other words, as transactions are processed, the committee updates the pool state that was retrieved at the beginning of the epoch.

Furthermore, the fees for LPs whose liquidity was used in filling a swap will be computed. To elaborate, when a user submits a swap transaction, they pay a small additional fee in the token pair of the pool. These fees are split proportionally among the positions (based on the amount of liquidity they provide) that occupy the price range for which the swap was executed. \sysname maintains a per-position fee balance, which is updated after every swap, again using the same logic used by the underlying AMM to compute these fees.

Lastly, recall that a user will not get her actual traded tokens until the end of the epoch. However, she can use these tokens for trading on the sidechain since the sidechain tracks all balances. So basically, the deposit is a tuple of two values; one for each token type. When a swap is executed, the input token amount is deducted from the user's deposit while the output token amount is added to this deposit, thus allowing the user to use the newly accrued tokens immediately. 

\emph{Summary rules.} The committee summarizes swaps as follows: for every client, all their swaps are combined into a single tuple containing the client public key and the total payout they should receive. The latter encapsulates both a deduction from their deposit and a refund of any deposit leftover. For example, say a client started with a deposit of $(10 A, 15 B)$ and issued one swap during an epoch that traded 5 token $A$ for 10 token $B$. The updated deposit would be $(5 A, 25 B)$, which represents a payout of 10 $B$, a deduction of 5 $A$, and a refund of deposit leftover of $(5 A, 15 B)$.\medskip

\noindent\textbf{Mints.}
Mint transactions allow the creation of new liquidity positions or modifying existing ones. An LP broadcasts a mint transaction to the sidechain that contains: the lower and upper ticks, representing the price range for which the liquidity can be used, and the type/amount of the token to be used as liquidity. The mint will be accepted if the issuer LP's deposit can cover the provided liquidity amount.

\emph{Processing.} This is also processed using the same logic used by the AMM. We employ a simple approach to track ownership of positions; the sidechain committee generates a unique identifier (e.g., the hash of the mint transaction and the LP's public key) for a new position, and the owner is the public key of the issuer LP. An existing position will receive an increase in its balance (or any other modifications on its price range) after verifying that the transaction issuer is indeed the rightful position owner. Mint transactions are initially invoked with a desired amount of token $A$ and token $B$ as input. The underlying AMM algorithms compute the maximum amount of liquidity (based off the input tokens) that the pool can take in at the current moment from both token types. These values represent the share of the pool liquidity now owned by the newly minted or modified position. The committee then deducts the provided liquidity amount (from both token types) from the corresponding LP's deposit balance.

\emph{Summary rules.} All mints are summarized as a list of liquidity positions with each position consisting of a tuple containing: the position identifier, the public key of the owner, the total amount of liquidity provided (or net change) for each token type, and total amount of accrued fees. Note that the payin/payout of the LP is also updated when summarizing mint transactions; all provided liquidity token amounts are deducted from their deposits as shown in Figure~\ref{fig:summary-rules}.\medskip

\noindent\textbf{Burns.}
A burn transaction allows a partial or complete liquidity withdrawal of a position. It is issued by an LP and contains: the position ID, the tick price limits, and the desired amounts of tokens $A$ and $B$ to be burned. 

\emph{Processing.} Processing a burn transaction boils down to determining if the issuer LP owns the position, then calculating the amount of liquidity this LP owns in a share, and converting that amount of liquidity into an amount of both tokens managed by the pool (using the original logic of the AMM). This would lead to updating the position range (upper and lower price ticks), or a deletion of the position if all its associated liquidity is withdrawn. If a deleted position has fees owed to it, the owner LP will receive these fees as part of her total payout computed at the end of the epoch, i.e., will be added to her deposit balance.

\emph{Summary rules.} Burns are summarized as part of summarizing mint transactions detailed above. Burns adjust the net changes of the pool liquidity balance, i.e., they decrement this balance. Any fully withdrawn position will be removed from the $\tokenbank$ state. The withdrawn liquidity will be added to the LP's deposit balance to be reflected on the payout.\medskip

\noindent\textbf{Collects.}
Collects allow LPs to collect the fees earned by their liquidity positions. An LP broadcasts a collect transaction containing the position identifier and fee amount to be collected.

\emph{Processing.} This includes determining if the issuer LP owns the position, and checking if the amount they want to collect can be covered by their fee balance. If all is fine, the issuer LP's deposit is updated to reflect the amount of collected fees, and the fee balance for that position is adjusted accordingly. 

\emph{Summary rules.} Summarizing collect transactions is also part of summarizing mints/burns and the LP payout structure. That is, fee balance of the referenced position is decreased based on the collected amount, and the payout to the issuer LP is computed based on their updated deposit balance (to which the collected fee amount has been added).\medskip

\noindent\textbf{Flashes.} 
Flash transactions allow users to request short-term loans within the duration of one mainchain block. The goal is to exploit arbitrage opportunities across trading platforms (e.g., different AMMs or other exchanges), thus achieving quick profits due to token price discrepancies. As a result, different from swaps/mints/collects, flash loans are not an internal AMM operation, but rather a short-term lending service that some AMMs offer, and it needs a transfer of \emph{actual tokens} to the loanee rather than tracking internal token balances. 

Flash loans are the only transaction type that \sysname does not offload to the sidechain; the delay in paying out the actual tokens (which happens at the end of an epoch) limits the intended use of flash loans that span a very short period. As such, flash transactions happen on the mainchain as in the original AMM architecture. Since flash loans take place in a singular block, they do not impact the pool balances; the loaned tokens must be returned within one block period or the loan will be inverted. As a result, they do not invalidate any of the transactions processed on the sidechain based off the balance snapshot taken at the beginning of an epoch. 

\begin{remark}
In terms of user experience, clients and LPs should be connected to the side and main chains, with their wallets issuing transactions each chain based on the transaction type. Also, receiving actual tokens is delayed until the end of the epoch. Since a user can use these immediately for trades within an epoch, the delayed payout has no impact. However, if a user wants to use these tokens on a different AMM, for example, then they have to wait until the epoch ends. Nonetheless, this delay is still way shorter than operating the AMM solely on the mainchain (i.e., without \sysname) due to congestion/low throughput issues of blockchains as discussed in the introduction and as we show in Section~\ref{sec:eval}.
\end{remark}

\subsection{System Operation---Chain Management}
In this section, we discuss the syncing process, sidechain pruning, and how \sysname recovers from interruptions.\medskip

\noindent\textbf{Syncing $\tokenbank$.}
The sidechain committee leader, after producing a summary-block for an epoch, calls the $\sync$ function in $\tokenbank$ that resembles a sync-transaction submission. The inputs to this function call include: the list of payouts for all clients and LPs, and the list of liquidity positions with their updated information. 

\emph{Authentication.} $\tokenbank$ must ensure that a $\sync$ invocation is issued by the rightful sidechain committee. We use the idea of quorum certificates (QC)~\cite{kwon2014tendermint,yin2019hotstuff} combined with threshold signatures, which we refer to as TSQC. To authenticate the $\sync$ call for epoch $e+1$, the election of committee $e+1$ must happen during epoch $e$. Then, this committee runs a distributed key generation (DKG) protocol~\cite{bacho2022adaptive} to generate a public verification key $\vk_c$ for the committee and secret shares of the signing key (one share per member) with a threshold of $2f+2$. This committee initiates an agreement on $\vk_c$, and then sends the agreement output to committee $e$ along with proofs of election of each member who participated in the agreement.\footnote{In our implementation, this election proof is the output of the verifiable random function (VRF) used in the election mechanism.} Committee $e$ verifies the election proofs, and then verifies that there is an agreement on $\vk_c$. It then records $\vk_c$ on $\tokenbank$ by adding that to the $\sync$ function call inputs they submit at the end of epoch $e$. During epoch $e+1$, committee $e+1$ runs an agreement over the $\sync$ function call inputs and signs using their signing key shares, which result in one signature over these inputs. The leader then invokes $\sync$ with the inputs and this signature. In turn, $\tokenbank$ verifies the signature using the recorded $\vk_c$. By the security of the threshold signature scheme, this signature will be valid only if at least $2f+2$ committee members has signed.

\emph{Processing.} If successfully verified, $\tokenbank$ processes the $\sync$ function call by updating the list of positions based on the summaries, i.e., delete fully withdrawn positions, create new positions, or adjust existing ones, Then, it updates the pool balance based on the reported payouts/updated positions. Lastly, it dispenses the payouts to the referenced clients/LPs.\medskip

\noindent\textbf{Sidechain pruning.} \sysname uses the \textit{block suppression technique} from chainBoost.  Once the transaction encapsulating the $\sync$ call is confirmed on the mainchain, all meta-blocks associated to this transaction will be pruned. Summary-blocks, as mentioned before, are permanent and represent checkpoints for sidechain state in each epoch. So they can be used to verify the state of the AMM reflected by $\tokenbank$ state variables.\medskip

\noindent\textbf{Handling interruptions.}
We identify the scenarios that can lead to operation interruption in \sysname and how to recover from them. Recall that sidechain committees use a secure PBFT-based consensus that assumes up to $f$ of the elected miners can be malicious. This is valid under a committee size that guarantees satisfying this condition with overwhelming probability, where we adopt the committee size lower bound from~\cite{chainboost-paper}. Thus, a committee will not agree on invalid blocks, and the only interruptions a committee might have are due to having a malicious or unresponsive leader. This leader may either propose an invalid meta/summary-blocks or invalid function call to $\sync$, or not initiate the agreement in the first place. Another interruption could result from rollbacks on the mainchain. That is, when the mainchain miners switch their canonical chain to the one satisfying a particular fork resolution criteria (i.e., the longest branch, or the heaviest one), causing the most recent blocks to be abandoned. This is an issue if the abandoned blocks contain $\sync$ transactions.  

Detection and recovery from these interruptions are done as in chainBoost~\cite{chainboost-paper}. Briefly, a leader that proposes an invalid block or $\sync$ call will be easily detected by the committee when verifying this proposal. Once detected, view-change~\cite{castro1999practical} triggers electing a new leader. For an unresponsive leader, if no agreement is initiated within a timeout period, view-change is invoked. For a leader that proposes invalid $\sync$ inputs, view-change will not help since this happens at the epoch end when it is time for the new committee to take over. Hence, this case, and the rollback interruption, are addressed using mass-syncing. The new committee issues a $\sync$ call covering the summaries they produced in their epoch and those produced earlier in the impacted epochs.

\begin{remark}
A sidechain operates as a regular blockchain, thus any transactions that have not been processed in an epoch will be carried over to the epoch after. All sidechain miners receive transactions destined to the sidechain, but only the elected committee mines meta and summary blocks. Thus, when a new meta-block is mined, the committee and all other sidechain miners remove all published transactions from their queues.
\end{remark}

\subsection{Security}
\label{subsec:security}
Since \sysname delegates AMM transaction processing to the sidechain, and introduces pruning and state synchronization, we show that under this new architecture, operation correctness and security (i.e., safety and liveness) of the underlying AMM are preserved. In Appendix~\ref{apdx:sec-analysis}, we prove the following theorem:

\begin{theorem}\label{th:security}
\sysname preserves the safety and liveness of the underlying AMM.  
\end{theorem}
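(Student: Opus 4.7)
The plan is to reduce the security of \sysname to three building blocks, argue each in turn, and then show the composition yields the same safety and liveness that the original AMM would enjoy if deployed directly on $\led_{\mainc}$. The three blocks are: (i) the security of $\led_{\mainc}$ as a smart contract-enabled blockchain (assumed), (ii) the security of $\led_{\sidec}$ as a PBFT-driven ledger under the honest-majority committee assumption, and (iii) the correctness of the sync/pruning interface between them. Because \sysname does not alter the AMM pricing or liquidity logic (every transaction is processed using the original rules, only in a different location), the bulk of the argument is about showing that the state the mainchain $\tokenbank$ eventually reaches is exactly the state the original AMM would reach after processing the same ordered set of transactions.

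For safety, I would first invoke the standard PBFT analysis (as imported from \cite{chainboost-paper}) to argue that, with overwhelming probability, the elected committee contains at most $f$ faulty members and therefore any meta-block or summary-block admitted into $\led_{\sidec}$ is agreed upon by all honest committee members, giving a common prefix property on $\led_{\sidec}$. Then I would argue that any $\sync$ call accepted by $\tokenbank$ was authorized by at least $2f+2-f = f+2$ honest committee members: the threshold-signature verification against the previously-recorded $\vk_c$ rules out forgery, and $\vk_c$ itself was established by a handoff whose validity committee $e$ verified via the attached election proofs (VRF outputs). Hence no malicious committee can sync an incorrect summary. Combined with mainchain safety, this implies honest parties agree on the $\tokenbank$ state trajectory. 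Finally I would verify the functional equivalence step: starting from the snapshot retrieved at epoch start, the committee applies the same swap/mint/burn/collect rules as the original AMM to the same ordered transaction stream, and the summary rules in Figure~\ref{fig:summary-rules} are straightforward folds over this stream, so the post-sync state equals the state a direct on-chain execution would produce; flashes are unaffected since they are handled synchronously on $\mainc$ against a pool balance that is only edited at epoch boundaries.

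For liveness, I would show that every valid transaction is confirmed within a bounded number of rounds. Flash transactions inherit liveness directly from $\led_{\mainc}$. For sidechain transactions, the bounded-delay network assumption guarantees a broadcast is seen by the current committee within $\Delta$, after which honest committee members include it in the next proposed meta-block; if the leader is unresponsive or proposes invalid content, the timeout $\tau$ and the view-change mechanism elect a new leader within a bounded number of additional rounds, and this procedure terminates because at most $f < $ committee size leaders can misbehave. Therefore within at most one epoch plus $O(f)$ view changes, the transaction appears in a confirmed meta-block, and the corresponding payin/payout is reflected on $\tokenbank$ at the next successful $\sync$. For the residual failure modes (malicious end-of-epoch leader, or mainchain rollback dropping a $\sync$ call) I would invoke the mass-syncing recovery reviewed in Section~\ref{subsec:operation}: the next committee rebundles the pending summaries into a combined $\sync$, preserving both the eventual confirmation guarantee and the invariant that $\tokenbank$'s state matches the concatenation of processed meta-blocks.

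The hard part of the argument, and the one I would spend most care on, is the interaction between the epoch-based deposit model and safety under adversarial behavior. Specifically, I would need to establish an invariant that the sum of payouts dispensed plus the sum of residual deposits refunded never exceeds what is backed by the $\tokenbank$ balances plus legitimately-traded liquidity, even when (a) a user's intra-epoch sidechain deposit depends on not-yet-finalized swap outputs, (b) a user's total payin exceeds their mainchain deposit and must be clawed back from payouts, and (c) an epoch is replayed via mass-syncing after a rollback so that earlier summaries are re-executed against a possibly different $\tokenbank$ snapshot. The argument here is essentially a ledger-consistency invariant maintained across the epoch boundary, and I would formalize it by induction on epochs, using the summary rules as the inductive step and the authenticated $\sync$ as the commitment step. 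Once that invariant is in place, safety and liveness of the underlying AMM follow from the per-chain properties already established.
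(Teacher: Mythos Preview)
Your proposal is correct in substance and covers the same security ingredients the paper relies on (PBFT with honest-majority committees, threshold-signature authenticated $\sync$, view-change for faulty leaders, mass-syncing for rollbacks and end-of-epoch failures), but you organize the argument differently. The paper proves the theorem via two threat-enumeration lemmas: for safety it lists four attack classes (invalid transaction processing, out-of-sync state, invalid syncing, bad sidechain blocks) and dispatches each; for liveness it lists three (targeted DoS, stalled sidechain, loss of public verifiability) and does the same. You instead set up a three-block reduction (mainchain security, sidechain ledger security, interface correctness) and then argue functional equivalence of the resulting $\tokenbank$ state with a direct on-chain execution. Your decomposition is arguably more systematic, and your proposed inductive invariant on deposit/payout solvency across epoch boundaries is more careful than anything the paper attempts---the paper's proof simply asserts that the committee ``uses the same logic adopted by the underlying AMM'' and does not work through cases (a)--(c) you identify.

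One point you should patch: your liveness argument triggers leader replacement only on unresponsiveness or invalid proposals, but a malicious leader can propose perfectly valid meta-blocks that simply omit a targeted user's transactions, and neither the timeout $\tau$ nor view-change fires in that case. The paper handles this explicitly as its DoS threat, and the mitigation is not view-change but committee \emph{rotation}: a censoring leader controls at most one epoch, and since successive committees are freshly elected with honest majority, the transaction is eventually picked up. You should fold that epoch-rotation argument into your bounded-confirmation claim; without it the ``within at most one epoch plus $O(f)$ view changes'' bound does not hold against selective censorship.
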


Briefly, \sysname preserves the correct behavior of the AMM since it processes the sidechain workload using the same logic adopted by the AMM itself. Thus, all transaction types---swaps, mints, collects, and burns---will produce the same outcome, or state changes, as if they are processed on the mainchain. Also, since the sidechain adopts a secure PBFT consensus protocol, that satisfies safety and liveness, the committee only agrees on valid meta-blocks that contain transactions and state changes conforming with the AMM operation and rules. Also, they only agree on valid summary-blocks produced based on the summary rules set in the system that correctly reflect these state changes. Furthermore, liveness of this protocol guarantees that valid submitted transaction will be processed/published on this sidechain. The sidechain also satisfies public verfiability since meta-blocks do get pruned until their sync-transaction is confirmed on the mainchain.

For the AMM base smart contract, recall that it is a dApp deployed on top of a secure smart contract-enabled blockchain (i.e., the mainchain), and its safety/liveness is guaranteed by the security of this chain. That is, mainchain miners will process only valid transactions (including sync-transactions authenticated using TSQC) based on $\tokenbank$ code, and any invalid transaction will be rejected. Lastly, we note that \sysname handles any interruptions on the sidechain side and roll-backs on the mainchain, as discussed in the previous section, using the leader change and mass-syncing mechanisms.

\section{Implementation}
\label{sec:impl}
To assess the performance gains of \sysname, we implement a proof-of-concept and conduct various experiments (our code can be found at~\cite{code-repo}). We chose a Uniswap-inspired use case to represent the underlying AMM. This section discusses the implementation, while the next section shows our results.

\textbf{Sidechain implementation.}
We use the code from~\cite{chainboost-repo}, which employs cryptographic sortition-inspired election mechanism~\cite{Gilad17} for committee election, and this committee runs a BLS collective signing (CoSi)-based PBFT-based consensus~\cite{github:cothority}. We add the modifications needed for \sysname; our sidechain has its own miners, and adopts the summary rules from Section~\ref{sec:design}. We also modify the syncing process to be an invocation to the $\sync$ function in $\tokenbank$ authenticated using TSQC for which we use a simplified version of the golang BLS library~\cite{githubGitHubBjornvdLaanBGRVerify}, and a pre-generated key to sign the $\sync$ transaction. Furthermore, our sidechain implements two extra functions to aid in performing the AMM functionality:

\begin{itemize}
    \item $\mathsf{CreateTx}_\sync$: invoked by the sidechain committee at the end of any epoch to create the $\sync$ call inputs based on the summary-block, including payouts, updated liquidity positions, and liquidity pool balance.

    \item $\mathsf{SnapshotBank}$: retrieves users' deposits at the beginning of an epoch. \sysname retrieves pool balances only for newly created pools, their updated balances can be computed by the sidechain based on the traffic it processes.
\end{itemize}

\textbf{Mainchain details.} 
For the mainchain, we utilize the Ethereum Sepolia testnet~\cite{sepolia} using the hardhat development environment~\cite{hardhat}. We implemented $\tokenbank$ in Solidity~\cite{solidity} and deployed it on Sepolia. In our implementation, the interfacing between the sidechain miners and the mainchain is handled through functionality provided by the Go-Ethereum project~\cite{go-eth}. To allow $\tokenbank$ to authenticate the $\sync$ function call, we implement BLS signature verification in solidity, where we use the 256-bit Barreto-Naehrig (BN256) curve operations defined in the Ethereum precompiles~\cite{reitwiessner2017eip,buterin2018eip}. We implement our hash-to-point functionality as the scalar multiplication of a Keccak256 hash of the $\sync$ entries and the generator of the $G_2$ curve of BN256.

\textbf{Use case: Uniswap-inspired AMM}. We implement swaps, mints, burns, and collects using the same logic as in Uniswap (Appendix~\ref{sec:uniswap}). We do not implement flashes since they represent a very small portion of the traffic, and thus will not impact the performance gains we report. For simplicity, our implementation manages a single pool, where we deployed two standard ERC20 contracts to provide the token pair traded in this pool and used in both the ammBoost and baseline experiments. To test Uniswap V3 in an isolated environment, we use the UniswapV3Factory contract to deploy a pool hosting the two ERC20 tokens. To test against the baseline implementation of Uniswap V3, we wrote and deployed a smart contract to interface with the various Uniswap contracts as detailed in the Uniswap documentation~\cite{UniswapPoolInteractionGuide}. This interface contract routes swaps to the swapRouter contract and mints/burns/collects to the Nonfungible position manager (NFPM) contract. Additionally, it manages all liquidity positions (through the ERC721Reciever interface) created by the users in our experiments.

\textbf{Traffic generation.} 
Users generate traffic on both the mainchain and the sidechain. On the sidechain, the traffic follows the same distribution as in Uniswap (see Appendix~\ref{appdx:uniswap-traffic-analysis}), i.e., 93.19\% of the traffic is $\tx_\swap$, 2.14\% is $\tx_\mint$, 2.38\% is $\tx_\burn$ and 2.27\% is $\tx_\collect$. Our implementation provides configuration settings to modify the distribution and volume of the generated transactions to test their impact on the reported performance metrics.

\section{Performance Evaluation}
\label{sec:eval}

\subsection{Experiment Setup}
We deploy our system on a computing cluster composed of 8 hypervisors, each running a 12-Core, 130 GiB RAM, VM, connected with 1 Gbps network link. This setup is capable of running around 8000 sidechain miners. Unless stated otherwise, an experiment length is 11 epochs, each of which consists of 30 sidechain rounds (a round lasts 7 sec).\footnote{This is a reasonable round duration in light of recent systems, e.g., in Polygon~\cite{polygon-block} and Algorand~\cite{algorand-block} a round lasts for 2 and 2.8 sec, respectively.} Our default meta-block size is 1 MB and a sidechain committee contains 500 miners---a comparable size to those in~\cite{zamani2018, Kokoris18, luu2016secure} (we test the impact of varying the committee size in Appendix~\ref{appdx:more-eval}). We deploy 100 AMM users, generating traffic that arrives at a constant rate of $\rho = \lceil \frac{V_D \times b_t}{3600 \times 24} \rceil$ where $V_D = 25 \times 10^6$ is the chosen daily volume of transactions. We measure the following metrics: 
\begin{enumerate}
    \item \textit{Throughput}: Number of transactions processed per sec.
    \item \textit{Sidechain transaction latency}: The delay between a transaction submission and its appearance in a meta-block. For accurate latency representation, and thus processing a comparable traffic amount, we empty the transaction queues after the end of each run.
    \item \textit{Mainchain transaction latency}: The delay between a transaction submission and its confirmation on Sepolia.
    \item \textit{Payout latency}: The delay between a transaction submission and the completion of $\tokenbank$ syncing in the epoch in which this transaction has been published on the sidechain. We measure this metric by reporting the sum of the sidechain transaction latency, the time needed to issue the $\sync$ call, and the time needed to process the transaction encapsulating the $\sync$ call on the mainchain.
    \item \textit{Gas cost}: The average number of gas units paid to process core transactions. 
    \item \textit{Main and side chain growth}: The growth (in bytes) of both the main and side chains. 
\end{enumerate}

\subsection{Comparison with the Baseline}
We compare \sysname against a baseline, which is a deployment of Uniswap V3 on Sepolia, as mentioned earlier.

\vspace{3pt}
\noindent\textbf{On-chain (itemized) per-operation overhead.} We evaluate the overhead of the deposit and the syncing processes in \sysname, and compare that to the baseline Uniswap on-chain operations. We set the daily volume $V_D$ to be $500$K transactions (10x Uniswap). We use a Gas Profiler~\cite{tenderly} to measure the gas cost of the different components of the $\sync$ transaction. As shown in Table~\ref{tab:gasLatencyBenchmark-ammboost}, we find that storing the state of the liquidity positions is the most expensive, as each consists of 192 bytes (or 6 words), incurring 22,100 gas units per word. The same gas cost of 22,100 gas unit per word is incurred when storing the liquidity pool balance. Each payout transaction incurs a constant fee of 15,771 units. A deposit of two tokens incurs a total cost of 105,392 units. The threshold signature-based quorum certificate incurs a fixed fee that corresponds to the gas cost of the BN256 operations needed for verification, and a fee that is proportional to the length of the summary data structure.

\begin{table}[t!]
  \caption{Mainchain latency and itemized gas cost for \sysname operations ($|sum|$ is the size of the summaries).}
    \label{tab:gasLatencyBenchmark-ammboost}
    \centering
         \fontsize{12pt}{14pt}\selectfont
    \resizebox{\linewidth}{!}{
    \begin{NiceTabular}{|l|c|c|c|c|c|c| }
    \hline
     \textbf{Operation} &   \multicolumn{5}{c|}{\textbf{Sync}} &   \textbf{De-} \\
     \cline{1-6}
     \textbf{Module} &  Payout & Storage & \multicolumn{3}{|c|}{Authentication} &  \textbf{posit}\\
    \cline{4-6}
     & (each) & (per&\multicolumn{2}{|c|} {Hash To Point} & \multicolumn{1}{c|}{Verify} & (2 tokens)\\
    \cline{4-5}
     & & 32 byte)& Keccak256 & ecMUL & \multicolumn{1}{c|}{Pairing} & \\
     \hline
     \textbf{Avg. gas}  &  15,771 & 22,100 & $30 + 6 \times \lceil \frac{|sum
     |} {256} \rceil$  & 6,000 & \multicolumn{1}{c|}{113,000} & 105,392\\
    \hline
    \textbf{MC. lat. (s)} &   \multicolumn{5}{c|}{15.28} &   54.60 \\
    \hline
\end{NiceTabular}
}
\end{table}

\begin{table}[t!]
    \caption{Mainchain latency and gas cost for Uniswap.}
    \label{tab:gasLatencyBenchmark-uniswap}
        \resizebox{\linewidth}{!}{
          \fontsize{6pt}{8pt}\selectfont
    \centering
    \begin{tabular}{|l|c|c|c|c|}
    \hline
       \textbf{Operation}   &  \textbf{Swap} &  \textbf{Mint} & \textbf{Burn} & \textbf{Collect}\\
       \hline
        \textbf{Avg. gas}  & 160,601.45 & 435,609.86 & 158,473.43 & 163,743.04\\ 
        \hline
        \textbf{\textbf{MC. lat. (s)}} & 31.34 & 42.24 & 12.72 & 13.45 \\
        \hline
    \end{tabular}
    }
\end{table}

Overall, the gas cost of the $\sync$ call is affected by the number of positions processed in an epoch and the updated pool balance; this cost does not scale with the number of processed transactions, but rather with the number of clients and liquidity providers. On the other hand, in baseline Uniswap the gas cost is proportional to the total generated traffic, where the numbers in Table~\ref{tab:gasLatencyBenchmark-uniswap} are per one transaction from each type. For the average latency, a $\sync$ transaction does not depend on any other mainchain transactions, so it is confirmed within one block on average. However, since a two-token deposit depends on two ERC20 approvals, and performs 2 transfers, it takes around 4 blocks in our experiments. The same behavior is observed in the Uniswap baseline, as a swap requires one approval from the user, it takes a minimum of two rounds to be processed, and a mint requires two approvals, thus it takes three rounds at least, if the operations are done sequentially.

We also report the per-operation storage cost. In particular, we report the cost breakdown for the $\sync$ call on the mainchain and the summary-block size for \sysname, and the transaction sizes for baseline Uniswap on the mainchain. For the $\sync$ call, the sizes of payout and position entries vary greatly between the summarized changes in a summary-block and the $\sync$ inputs submitted to the mainchain. This is due to the difference in encoding and binary packing between the side and main chains. On the mainchain, Ethereum's application binary interface (ABI) packing keeps track of the data and all the information needed to reinterpret it back, while on the sidechain we use simple binary packing. We also have an extra 192 bytes storage overhead on the mainchain needed for the BLS signature and its public key (namely, $\vk_c$) for authenticating the $\sync$ call. Our findings are in Table~\ref{tab:sizes}.

For Uniswap, we notice that the transactions on Sepolia are smaller than the ones we observe on Ethereum (Appendix~\ref{appdx:uniswap-traffic-analysis}). This is because these chains use different Uniswap transaction routers. The calls to the universal router used on Ethereum end up requiring more arguments, resulting in longer transactions. Uniswap on Sepolia deploys a simpler transaction router and without the universal router (while Uniswap V3 on Ethereum deploys both contracts). 

In general, Uniswap incurs a larger storage cost as its transactions are quite large and all are logged on the mainchain. While for \sysname, only (the less frequent) $\sync$ call transaction is logged on the mainchain.

\begin{table}[t!]

    \caption{Operation storage overhead.}
    \label{tab:sizes}
    \resizebox{\linewidth}{!}{
    \centering
      \fontsize{7pt}{9pt}\selectfont

    \begin{tabular}{|l|c|c|c|c|}
    \hline
      \textbf{\sysname}   &  \textbf{Payout} & \textbf{Position} & \multirow{2}{*}{\textbf{$\vk_c$}} & \multirow{2}{*}{\textbf{Signature}} \\
    \textbf{$\sync$ component}   & \textbf{entry}  & \textbf{entry} &  &  \\
      \hline
       \textbf{Size on Mainchain (B)}  & 352 & 416 & 128 & 64   \\
       \hline
       \textbf{Size on Sidechain (B)}  & 97 & 215 & \multicolumn{2}{|c|}{\cellcolor{gray}} \\
       \hline
       \hline
       \textbf{Uniswap operation} & \textbf{Swap} & \textbf{Mint} & \textbf{Burn} & \textbf{Collect} \\ 
        \hline
       \textbf{Size on Mainchain (B)} & 365.27 & 565.55 & 280.21 & 150.18 \\
    \hline
    \end{tabular}
}
\end{table}

\begin{figure}[t!]
\begin{subfigure}{.5\columnwidth}
  \centering
  \includegraphics[width=\linewidth]{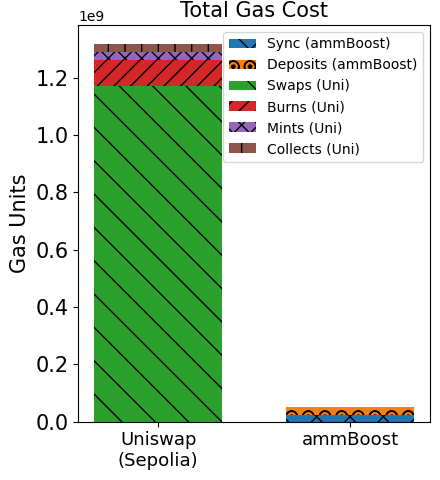}

\end{subfigure}%
\begin{subfigure}{0.5\columnwidth}
  \centering
  \includegraphics[width=\linewidth]{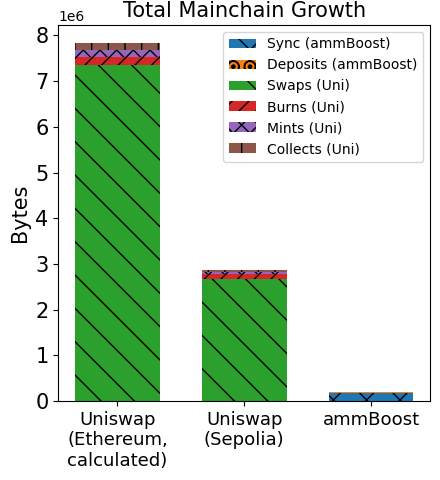}

\end{subfigure}
\vspace{-8pt}
\caption{Gas cost and chain growth comparison.}
\label{fig:test}
\end{figure}

\vspace{3pt}
\noindent\textbf{Overall comparison.} We report the total gas cost and the mainchain state growth of the baseline Uniswap and \sysname. We set the daily volume $V_D$ to be $500$K transactions (10x Uniswap) with the default traffic distribution. We measure the overall mainchain gas cost of relevant operations, and the state growth of the mainchain. 

As shown in Figure~\ref{fig:test}, even if the sync transactions end up being heavy on gas as the number of positions and payouts increases, we achieve a 96.05\% gas reduction when compared to Uniswap Sepolia. The high gas cost of the $\sync$ transaction is offset by it being uncommon (one occurrence per epoch). On the other hand, the gas cost of swaps, mints, burns, and collects in Uniswap are high since all are processed on the mainchain (while in \sysname these are processed on the sidechain). A similar trend is observed for the mainchain state growth, where \sysname provides 93.42\% decrease in growth compared to Uniswap on Sepolia, and 97.60\% decrease when compared to Uniswap on production Ethereum.\footnote{The growth for Uniswap on production Ethereum is calculated by multiplying the count of each transaction type in our experiment by its size as reported in Appendix~\ref{appdx:uniswap-traffic-analysis}.}

\subsection{Impact of Parameter Configuration}
We study the impact of parameter configuration on \sysname's performance, including scalability, traffic distribution, block size, sidechain round duration, number of rounds per epoch, and committee size. Due to space limitation, we only show the full results for the scalability experiment, while the rest can be found in Appendix~\ref{appdx:more-eval}.

\vspace{3pt}
\noindent\textbf{Scalability.} In this experiment, we test the scalability of \sysname (for a single pool) to understand its behavior under heavy traffic. We follow the same traffic distribution as in Uniswap and vary the daily volume $V_D \in \{50K, 500K, 5M, 25M\}$. We record the impact on throughput and transaction/payout latency as shown in Table~\ref{tab:scaling-amm}.

Throughput-wise, we record a low throughput of 0.42 tx/s to 33.04 tx/s for a daily volume of 50K to 5M transactions (roughly 1x-100x Uniswap's daily volume). This is mainly due to the mainchain blocks not being full as this workload is way below the capacity that \sysname can handle. While for traffic that is 500x Uniswap's daily volume, \sysname achieves a throughput of 138.06 tx/s.

Latency-wise, we achieve a quasi-instant trend when the daily volume is 50K-5M (transactions that arrive at the beginning of the round get processed within the same round, while the residual amount of latency is due to transactions generated close to the epoch end and processed in the next epoch). This leads to payouts being processed within one epoch, with the transactions happening earlier in the epoch having a higher payout latency than the ones happening towards the end, and averaging at half an epoch plus the time needed to confirm the $\sync$ transaction on the mainchain $(\sim 120s)$. Transaction congestion happens when the daily volume is 25M, resulting in higher average latency as the table shows.

\begin{table}[t!]
    \caption{Scalability of \sysname.}
    \label{tab:scaling-amm}
    \centering{}
     \fontsize{9pt}{11pt}\selectfont
\resizebox{\columnwidth}{!}{
    \begin{tabular}{|l|c|c|c|c|}
    \hline
        \textbf{Daily volume} & 50K & 500K & 5M  & 25M \\ 
    \hline
        \textbf{Throughput (tx/s)} &  0.42 & 3.41 & 33.04 & 138.06\\
    \hline
        \textbf{Avg. sc latency (s)} &7.13& 7.13 & 7.13 & 231.52 \\
    \hline 
        \textbf{Avg. payout latency (s)} & 120.71 & 120.71  & 120.71   & 346.49 \\
    \hline
    \end{tabular}
    }
\end{table}

\subsection{Comparison with Rollups} We compare \sysname to an AMM that uses an Optimism-inspired rollup solution (\ammop). \ammop processes 1.8MB of transactions per batch~\cite{optimism-size}, and takes around 35 sec to process a batch (Optimism takes 2-4 Ethereum rounds~\cite{jumpcryptoBridgingFinality}, we average that to 3 rounds). \sysname's configuration and traffic generation patterns remain the same as above. Table~\ref{tab:amm-op-comparison} shows our results, where transaction latency is the time needed for a transaction to appear in a meta-block in \sysname, or a non-finalized processed rollup
in \ammop.

\begin{table}[t]
    \centering

    \caption{Comparison between \sysname and \ammop.}
    \label{tab:amm-op-comparison}

    \begin{tabular}{|l|C{1.85cm}|C{1.85cm}|C{1.85cm}|}
        \hline
         &  Throughput  & Transaction  & Payout\\
        &      (tx/s)  &   Latency (s) & Latency (s) \\
         \hline
 \ammop  &           51.16         &       2577.28                 &     604,815.28              \\
        \hline
 \sysname &          138.06         &           231.52             &                346.49    \\
        \hline
    \end{tabular}
\end{table}

As shown, \sysname outperforms \ammop across all metrics. We observe a 2.69x improvement in throughput and a 91.02\% decrease in transaction latency. This is due to the higher capacity of \sysname compared to \ammop within the same time duration; \sysname processes 5 MB of transactions vs 1.8 MB for \ammop, allowing for more transactions to be processed in a small time duration, and hence, reducing waiting time for transactions. Finally, we observe that the payout latency is abysmal for \ammop, due to the long contestation period (which is one week in Optimism).

\section{Conclusion}
\label{sec:conclusion}
We presented \sysname, a secure state growth controller and throughput booster for AMMs. \sysname combines a dependent-sidechain architecture with a functionality split of the AMM between the main and the side chains, and introduces several techniques to allow layer-2 based processing of the AMM workload without correctness and security. We analyze the security of our system and conduct thorough performance evaluation experiments. The results show the great potential of \sysname in reducing the on-chain storage footprint of AMMs and boosting their scalability.

\section*{Acknowledgment}
M.E.N. is supported by NSF under Grant No. CNS-2226932, and G.A. is supported by the Latest in DeFi Research (TLDR) fellowship funded by Uniswap Foundation.

\bibliographystyle{IEEEtran}
\bibliography{ammBib}

\appendices
\section{Overview of PBFT-based Consensus and Committee Election}
\label{apdx:pbft-review}
Given the low throughput and high confirmation delays of blockchains, several consensus protocols utilize practical Byzantine fault tolerant (PBFT)~\cite{castro1999practical} to boost throughput and speed agreement on new blocks. These protocols, in general, employ the following paradigm: the system operates in rounds and epochs, where a round is period during which a new block is mined, and an epoch is $\omega$ consecutive rounds. For each epoch, a committee of the blockchain miners is elected to mine blocks during each round. This mining is basically reaching an agreement on a new block proposal. Generally, a committee member agrees by voting on (i.e., signing) the block. Owning a mining power, e.g., a stake or computing power, is used to establish Sybil-resistant identities for the miners to defend against attackers who may spawn Sybils to take over the committee. Furthermore, the committee is refreshed for every epoch to reduce the chances that an adversary takes over the majority of the committee over time (i.e., to mitigate targeted attacks against the committee).

There are two flavors of PBFT-based consensus: leader-based consensus, e.g.,~\cite{Kogias16,kwon2014tendermint, buchman2016tendermint, yin2019hotstuff, moniz2020istanbul, gueta2019sbft, kotla2007zyzzyva}, and voting-based consensus, e.g.,~\cite{Gilad17, miller2016honey, antoniadis2023leaderless, crain2018dbft}. In leader-based consensus, a leader is elected for each committee, so that in each round this leader proposes a block and initiates the agreement to collect votes from the committee members on the block proposal. In voting-based consensus, committee members are divided into block proposers and voters. In any round, each proposer proposes a block that voters vote on, and then the block proposal with a vote majority will be added to the blockchain. As noted, different from proof-of-work blockchains, where mining a block requires solving a computationally-heavy puzzle, in PBFT-based consensus mining a new block relies on reaching agreement. Each committee member validates a block proposal (to ensure that it contains valid transactions and state changes) and, if valid, signs the block to indicate their agreement. 

In both flavors, secure committee election is crucial. This election algorithm must satisfy several properties including: \emph{Sybil-resistance} to ensure that only real miners are elected. This is usually done, as mentioned above, by requiring miners to own some mining power so that the probability of a miner to be elected is proportional to the amount of mining power they own. For example, in proof-of-work blockchains, this probability is proportional to the computing power this miner owns, as in ByzCoin~\cite{Kogias16} that employs a sliding window approach for election. In proof-of-stake blockchains, this will depend on the amount of stake a miner owns, as in Algorand~\cite{Gilad17} and Omniledger~\cite{Kokoris18}. Thus, the Sybil-resistance property also captures fairness in election so a party is elected based on her share in the system. 

\emph{Unpredictability} is another essential property to address targeted attacks resulting from knowing the committee members in advance. Cryptographic sortition~\cite{Gilad17,Kokoris18} supports that, where election is based on a Verifiable Random Function (VRF). \emph{Public verifiability} is another important property to enable verifying that indeed miners have been elected and accept their proposals/voting (or agreement endorsement). For example, in cryptographic sortition, the VRF produces a publicly verifiable proof attesting to the output (where this output determines whether a miner is elected). Various committee election mechanisms with different trade-offs exist in the literature~\cite{Kogias16,pass2017hybrid,Gilad17,kiayias2017ouroboros,gavzi2023fait}.

\section{Security Analysis}
\label{apdx:sec-analysis}
We aim to develop a secure state growth control solution that preserves the valid and secure operation of the AMM. \sysname builds a sidechain, which is basically a blockchain, that interacts with the application layer of the mainchain through the base AMM smart contract $\tokenbank$. Thus, starting with an AMM deployed on a secure mainchain (that satisfies safety and liveness as defined below), we want to show that deploying \sysname does not violate the correct and secure operation that the AMM had when it was completely deployed on the mainchain.

\emph{Ledger security.} A ledger $\led$ is secure if it satisfies the following properties~\cite{garay2015bitcoin}:
\begin{description}
\item[Safety:] For any two time rounds $t_1$ and $t_2$ such that $t_1 \leq t_2$, and any two honest parties $P_1$ and $P_2$, the confirmed state of $\led$ (which includes all blocks buried under at least $k$ blocks, where $k$ is the depth parameter) maintained by $P_1$ at $t_1$ is a prefix of the confirmed state of $\led$ maintained by party $P_2$ at time $t_2$ with overwhelming probability.

\item[Liveness:] If a valid transaction $tx$ is broadcast at time round $t$, then with overwhelming probability it will be recorded on $\led$ at time at most $t + u$, where $u$ is the liveness parameter.
\end{description}

As such, a secure blockchain grows over time and records only valid transactions and blocks in an immutable way, i.e., confirmed blocks and transactions cannot be altered, with a consistent view of the confirmed chain among the miners. The ledger protocol is parameterized by predicates to verify transaction and block validity. For dApps, validity is governed by the code of their smart contracts, and miners ensure that the ledger state changes have been produced by a successful execution of this code. 

\sysname reduces the AMM functionality deployed on the mainchain, and it processes most of the workload (following the same logic of the original AMM) on the sidechain. Proving security of \sysname boils down to preserving the following invariant: the underlying AMM still achieves safety and liveness. In other words, the security and correctness guarantees of an AMM deployed on the mainchain must be preserved after deploying \sysname. Note that safety of an AMM means reaching a consistent view among the miners about its confirmed correct state, and liveness means that this state grows over time by adding correct records.

This amounts to showing that \sysname's sidechain is a secure ledger, and that all additional techniques that \sysname introduces preserve the invariant above. Recall that the sidechain implements a secure PBFT consensus protocol under the assumption that any committee in any epoch can have at maximum $f$ faulty members (out of $3f+2$ members). So it also achieves ledger safety and liveness. Thus, what remains to show is that \sysname operational design---meta-block pruning, mainchain (i.e., $\tokenbank$) state syncing, and handling interruptions---does not violate the invariant above.

Briefly, having a secure PBFT consensus, that adopts the leader change to deal with malicious/unresponsiveness committee leaders, on the sidechain guarantees safety and liveness of the sidechain. Thus, only correct records that follow the underlying AMM logic are accepted. The sidechain also satisfies public verfiability since meta-blocks do not get pruned until their sync-transaction is confirmed on the mainchain. Safety and liveness of the AMM base smart contract are guaranteed by the security of the mainchain. For the syncing process, which is the connection between AMM state changes produced on the sidechain and its mainchain state, its security relies on sidechain/mainchain consensus security, as well as the security of the TSQC authentication method. The former guarantees that syncs are produced correctly at the end of each epoch, and will be processed based on $\tokenbank$ code on the mainchain. While the latter guarantees that syncs are accepted only if produced by the rightful epoch committee. Thus, breaking the security of syncing leads to breaking the security of consensus, or the security of TSQC authentication (which relies on the security of committee election and threshold signatures). 

In order to prove Theorem~\ref{th:security}, we prove two lemmas showing that \sysname preserves safety and liveness of the underlying AMM (these proofs are inspired by those found in~\cite{chainboost-paper}).

\begin{lemma}\label{lm:safety}
\sysname preserves the safety of the underlying AMM.
\end{lemma}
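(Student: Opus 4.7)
The plan is to decompose the AMM state into its two physical components and argue safety for each, then glue them together. The AMM state in \sysname lives in (i) the $\tokenbank$ contract on $\led_{\mainc}$, which holds the authoritative pool balances, positions, and deposits, and (ii) the sidechain $\led_{\sidec}$, whose summary-blocks serve as permanent checkpoints of the state changes that feed into $\tokenbank$ via $\sync$-transactions. The goal is to show that for any two honest parties $P_1, P_2$ observing confirmed state at times $t_1 \leq t_2$, the view of $P_1$ is a prefix of the view of $P_2$, with overwhelming probability.

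First I would handle the mainchain component. Since $\led_{\mainc}$ is assumed to be a secure smart contract-enabled blockchain, its safety directly gives safety for the confirmed sequence of $\tokenbank$ transactions (deposits, syncs, and flashes), once they are buried under $k$ blocks. The only subtlety is to argue that no spurious $\sync$ invocation can alter the confirmed $\tokenbank$ state: this follows from the authentication scheme on $\sync$, where the threshold signature under $\vk_c$ requires at least $2f+2$ committee shares, which under the PBFT honest-majority assumption can only be produced by an honest quorum that agreed on the same inputs. Combined with the unforgeability of the BLS threshold signature (a standard cryptographic assumption against $\ppt$ adversaries), this rules out conflicting $\sync$ invocations except with negligible probability.

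Next I would argue sidechain safety. The committee $\com$ runs a PBFT-based consensus with up to $f$ faulty members out of $2f+2$, which by standard PBFT safety guarantees that honest committee members agree on the same sequence of meta-blocks and the epoch summary-block. Between epochs, committee handover is anchored by recording $\vk_c$ on $\tokenbank$ during the preceding epoch, so the transition from one committee's chain to the next is itself protected by mainchain safety. Pruning does not break safety because honest parties agree on the summary-block before the associated meta-blocks are discarded, and the summary-block (together with the confirmed $\sync$-transaction on $\led_{\mainc}$) uniquely determines the induced state change; thus a party joining later, who only sees summary-blocks, still reconstructs a prefix of the common view.

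The hard part will be reconciling the two components under the unidirectional dependency and the interruption scenarios listed in Section~\ref{subsec:operation}. In particular, a mainchain rollback can abandon a confirmed $\sync$-transaction, temporarily desynchronizing the two chains; I would handle this by invoking the mass-syncing recovery procedure and by tying ``confirmed state'' on the mainchain to the $k$-deep prefix, so that any pre-confirmation reordering does not violate the prefix property for honest observers. Similarly, a malicious or unresponsive leader proposing invalid meta, summary, or $\sync$ inputs is caught by the committee's block/transaction validity checks, after which view-change elects a new leader; because the PBFT protocol itself preserves safety across view-changes, no inconsistent block is ever confirmed. Putting the mainchain safety, PBFT safety, threshold-signature unforgeability, and the pruning-preserves-checkpoints argument together via a union bound over the negligible failure events yields the overall safety of the AMM under \sysname.
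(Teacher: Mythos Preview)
Your proposal is correct and covers essentially the same ingredients the paper relies on (mainchain security, PBFT safety with honest-majority committees, threshold-signature authentication of $\sync$, view-change for bad leaders, mass-syncing for rollbacks, and the observation that pruning preserves the summary checkpoints). The organization, however, differs: you decompose by \emph{component} (argue safety for $\led_{\mainc}$ and $\led_{\sidec}$ separately, then glue via the prefix property), whereas the paper decomposes by \emph{threat}, enumerating four attack classes---invalid processing of AMM transactions, out-of-sync AMM state, invalid syncing, and sidechain-quality violations---and showing each is mitigated. The paper's threat-based framing surfaces one point you leave implicit: that the sidechain actually applies the underlying AMM logic and only accepts transactions backed by valid deposits and a correct initial pool snapshot (this is their ``invalid processing'' bullet, which you subsume under ``PBFT honest majority''). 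Conversely, your framing hews more closely to the formal prefix-style safety definition in the preliminaries and makes the union-bound structure explicit. One small mismatch worth noting: the paper treats a leader who simply withholds the end-of-epoch $\sync$ call as an \emph{out-of-sync} case remedied by the next committee's mass-syncing (view-change is not invoked because the epoch is already over), whereas you group all leader misbehavior under view-change; the distinction matters operationally but not for the safety conclusion.
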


\begin{proof}
Since \sysname offloads AMM operations to the sidechain, and implements meta-block pruning as well as mainchain (i.e., $\tokenbank$) state syncing, we identify the following threats that may impact safety in our system:

\begin{itemize}
    \item Invalid processing of AMM transactions (or violating sidechain quality): the sidechain committee does not follow the AMM logic in processing transactions, or accept transactions from users who do not own enough deposits, or process these transactions based off an invalid initial state of the token pool. This leads to mining invalid meta and summary blocks. 
    
    \item Out-of-sync AMM state on the mainchain: A committee leader does not issue a $\sync$ function call at the end of the epoch, or the committee does not conclude agreement on an issued $\sync$, causing the state of the AMM on the mainchain and the state maintained on the sidechain to be out of sync. Out-of-sync AMM mainchain state could also happen due to rollbacks on the mainchain causing recent $\sync$ calls to be lost.
    
    \item Invalid syncing: A sidechain committee agrees on invalid inputs (or syncing information) for the $\sync$ function call, or an illegitimate committee pretends to be the elected one and issues such an invalid sync.
\end{itemize}

We show how \sysname mitigates these threats, which mainly relies on the use of a secure PBFT consensus with a secure committee election mechanism, a secure threshold digital signature scheme, the leader-change mechanism to handle the case of malicious/unresponsive committee leader, and pruning meta-blocks only after the corresponding sync-transaction is conformed on the mainchain.

\emph{Invalid processing of AMM transactions.} The sidechain in \sysname uses the same logic adopted by the underlying AMM to process all transactions. This is based off the latest state of the AMM (with respect to user deposits, liquidity positions, and pool balances from $\tokenbank$ on the mainchain). The use of a secure PPFT protocol guarantees that only valid records that conform with these rules will be accepted, and agreed-upon, in meta-blocks. Also, processing transactions and verifying a meta-block will be based on the latest pool state that each member of the committee retrieves from the mainchain. Security of the PBFT protocol also guarantees that the committee only agrees on valid summary-blocks based on the summary rules in \sysname and these meta-blocks. As meta-blocks do not get pruned until their $\sync$ call transaction is confirmed on the mainchain, anyone can verify the validity of these blocks and the validity of the summary-block (as well as the $\sync$ call).  

Moreover, even if a leader proposes invalid blocks, due to using a secure PBFT-based consensus, the committee will not agree on such blocks, and without this agreement, such blocks will not be adopted on the sidechain. In this case, the committee will not only reject the block proposal, but also initiate a leader-change to elect a new leader who will take over for the rest of the epoch.


\emph{Out-of-sync AMM state on the mainchain.} This is mitigated using regular syncing as well as mass-syncing. On the one hand, security of the sidechain, via its PBFT-based consensus, guarantees that summary-blocks are produced correctly with corresponding valid sync-transactions at the end of each epoch. Also, since only up to $f$ nodes in the committee can be malicious, out of $3f+2$, and $2f+2$ votes are needed to reach agreement, an adversary cannot control the majority of the committee to stall agreement on a sync call. Security of the mainchain guarantees that valid syncing will be accepted and processed.

On the second hand, a leader who does not initiate an agreement on the $\sync$ function call, or does not submit the result of the agreement to $\tokenbank$, is easily detected by the new committee as no function call has been issued and recorded on the mainchain. The committee of the next epoch then syncs $\tokenbank$ based on the summaries it produces in its epoch and the ones in the previous (one or multiple) epoch, i.e., perform mass-syncing. Same for any rollbacks that may happen on the mainchain, mass-syncing will include all summaries that has been lost due to the rollback. Since a new committee is elected for every epoch, it is infeasible for an adversary to control every committee leader on the long run.

\textit{Invalid syncing.} If a leader issues invalid $\sync$ call inputs, the sidechain committee will not endorse these inputs (due to employing a secure PBFT-based consensus). So simply this invalid syncing inputs will be ignored, and mass-syncing (discussed above) will handle the syncing within the next epoch. Having an illegitimate committee pretend to be the rightful one to submit (invalid) sync is addressed using the TSQC-based authentication in \sysname. The committee of epoch $e$ will not accept the generated committee public verification key $\vk_c$ unless there are valid proofs of election showing that the newly claimed committee is the rightful one (so this relies on the secure committee election whose output is publicly verifiable). Furthermore, an illegitimate committee (or an attacker) instead may try to forge a signature over the syncing information under a valid $\vk_c$, which succeeds with negligible probability by the security of the threshold digital signature scheme.

Accordingly, \sysname satisfies the safety of the AMM in the sense that all sidechain workload is processed correctly, and the mainchain state of the AMM is synced correctly.
\end{proof}

\begin{lemma}
\sysname preserves the liveness of the underlying AMM.
\end{lemma}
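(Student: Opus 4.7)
The plan is to bound, for any valid transaction $\tx$ submitted at round $t$, the time by which its effect is reflected in the confirmed state of $\tokenbank$ on the mainchain. I would split the argument into three phases: (i) inclusion of $\tx$ in a sidechain meta-block, (ii) inclusion of a summarizing summary-block at the end of the current (or a subsequent) epoch, and (iii) confirmation of the corresponding $\sync$ invocation on the mainchain. Because \sysname inherits the sidechain machinery of chainBoost, the same parameterized bounds apply; the goal is to show that the additional coupling to the mainchain does not destroy them.

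First, for sidechain liveness, I would invoke the security of the PBFT-based consensus together with the committee election guarantee that an elected committee contains at most $f$ malicious members with overwhelming probability. Under the bounded-delay network assumption with parameter $\Delta$, any valid $\tx$ broadcast by an honest user is observed by every honest committee member within $\Delta$, and by the liveness of the underlying consensus it is included in a meta-block within a bounded number of sidechain rounds. If the current leader is unresponsive or proposes invalid blocks, the timeout $\tau$ triggers a view-change and a new leader takes over; since at most $f$ leaders can be malicious in an epoch, after a bounded number of view-changes an honest leader is selected and $\tx$ is included. Thus $\tx$ appears in a meta-block within some $u_{\sidec}$ rounds.

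Second, for epoch closure, I would argue that at the end of the epoch the committee (following the same PBFT-based agreement and the summary rules of Figure~\ref{fig:summary-rules}) produces a valid summary-block encoding the payins, payouts, and position updates induced by $\tx$ together with all other transactions of the epoch. The same leader-change mechanism handles a malicious or unresponsive leader at this step. Then the $\sync$ function call is constructed, signed via the threshold signature scheme (which produces a valid signature whenever at least the required number of honest committee members sign, which holds by the honest-majority assumption), and submitted to the mainchain. By the assumed liveness of the secure mainchain, the $\sync$ transaction is confirmed within the mainchain liveness parameter $u_{\mainc}$.

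Third, I would handle the adversarial interruptions that are specific to \sysname: a leader that refuses to submit $\sync$, agreement on invalid $\sync$ inputs, or mainchain rollbacks that drop a confirmed $\sync$. The recovery mechanism is mass-syncing: the next epoch's committee (which again has honest majority by assumption) detects the missing or invalid $\sync$ by inspecting $\tokenbank$ on the mainchain and includes the outstanding summaries in its own $\sync$ call. Hence the inclusion of $\tx$'s effect can be delayed by at most a constant number $\kappa$ of consecutive adversarial epochs before an honest committee successfully syncs. Combining the three bounds yields the liveness parameter $u = u_{\sidec} + \kappa\cdot (\omega\cdot r_{\sidec}) + u_{\mainc}$, where $r_{\sidec}$ is the sidechain round duration, within which $\tx$ is reflected in the confirmed state of the AMM on the mainchain with overwhelming probability.

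The main obstacle I anticipate is phase (iii): carefully bounding the number of consecutive epochs in which syncing can fail due to the combined effect of adversarial leaders and mainchain rollbacks, and showing that mass-syncing always catches up without losing any prior summary. This requires leveraging the permanence of summary-blocks (they are not pruned before their $\sync$ is confirmed) so that the next honest committee can reconstruct and re-submit all outstanding state changes, together with the authentication argument that an illegitimate committee cannot pre-empt this process (which follows from the security of the election proofs and the threshold signature scheme as already established in the safety argument).\qed
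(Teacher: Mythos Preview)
Your approach differs from the paper's: the paper enumerates three liveness threats (DoS/censorship, sidechain stalling, loss of public verifiability) and argues qualitatively that each is mitigated, whereas you give a quantitative pipeline argument bounding the end-to-end delay through three phases and extract an explicit liveness parameter $u$. Your decomposition is cleaner and more informative than the paper's.

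There is, however, a gap in phase~(i). Your argument for inclusion in a meta-block rests on view-change: a leader that is unresponsive or proposes invalid blocks is replaced after timeout~$\tau$, and after at most $f$ such changes an honest leader takes over. But a malicious leader can produce perfectly valid meta-blocks on schedule while silently omitting $\tx$; this triggers neither the timeout nor the invalid-block check, so no view-change occurs and that leader remains for the entire epoch. The paper handles exactly this case under its ``DoS attacks'' heading by invoking committee (and hence leader) rotation \emph{across epochs}: a censoring leader holds office for at most one epoch, after which a fresh committee with a fresh leader is elected. Your bound $u_{\sidec}$ therefore cannot be guaranteed within a single epoch; it must absorb at least one additional epoch to cover targeted censorship, and this has to be argued separately from the view-change mechanism. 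The $\kappa$-epoch slack you introduce in phase~(iii) does not cover this, since that slack is applied only after $\tx$ has already appeared in a meta-block.

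A minor point: the paper also lists ``public verifiability of the sidechain'' as a liveness threat, arguing that meta-blocks are pruned only after the corresponding $\sync$ is confirmed and that summary-blocks are permanent. You touch on the permanence of summary-blocks in your final paragraph (as the enabler of mass-syncing), but do not frame it as a liveness concern in its own right; whether it belongs under liveness is arguably a matter of taste.
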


\begin{proof}
The liveness of the sidechain impacts the liveness of the AMM in the sense that any liveness threats on the sidechain will impact the operation progress of the AMM, i.e., processing and publishing newly issued transactions. We identify the following threats that may arise and violate the liveness of the AMM under our setting:
\begin{itemize}
    \item Denial of service (DoS) attacks: The sidechain committee deliberately ignores and omits transactions coming from particular users.
    \item Violating sidechain liveness: the sidechain committee does not mine meta- and summary-blocks or does not submit $\sync$ function calls (that could be due to malicious/unresponsive leader or malicious/unresponsive committee that does not reach an agreement).
    \item Violating the public verifiability of the sidechain: this covers all threats related to the syncing and pruning of meta-blocks that may impact the public verifiability of the sidechain (which in turn impacts the public verifiability of the AMM state). Under these we have AMM users who may claim that they they are supposed to obtain larger payout than what is dispensed, or more liquidity shares/fees that what is reported in their liquidity positions (by claiming that they submitted valid transactions for that that were pruned). 
\end{itemize}

Proving that \sysname mitigates these threats is the same as in~\cite{chainboost-paper}. For completeness, we review the proof arguments here. \emph{DoS} is addressed via rotating committee election (a new committee is elected for each new epoch) such that this committee has an honest majority. A leader that targets particular users, and so omit their transactions from all proposed meta-blocks, will operate for one epoch and then a new committee with a new leader will take over for the next epoch. Thus, maintaining a situation where all future leaders are malicious and perform the same DoS is infeasible. \emph{Sidechain liveness} is satisfied due to the use of a secure PBFT-based consensus (which achieves liveness). Also, leader-change allows changing a leader who deliberately attempt to stall the network by not proposing new meta-blocks. While mass-syncing mitigates the case of a malicious leader who does not initiate agreement on a summary-block or $\sync$ function call. 

\emph{Public verifiability} is guaranteed by the security of the PBFT consensus; only valid meat-blocks and summary-blocks are agreed upon that capture the processed transactions and state changes correctly. Also, meta-blocks are not pruned until their corresponding $\sync$ call is confirmed on the mainchain, allowing anyone to verify the syncs, and summary-blocks are permanent (as discussed in the proof of Lemma~\ref{lm:safety}). Also, by having the AMM base smart contract $\tokenbank$ on the mainchain synced correctly based on the sidechain summaries (also as discussed in the proof of Lemma~\ref{lm:safety}), summaries are not lost. All of these allow anyone to verify the validity of the evolving state of the AMM. 

Accordingly, \sysname preserves the liveness of the AMM.
\end{proof}

\section{Concrete Use Case: Uniswap} 
\label{sec:uniswap}
Uniswap has three versions: Uniswap V1, released in November 2018, consisted of the baseline protocol that implemented ERC20 token swaps with Ethereum and all of the liquidity management methods (mint, burn, collect). Uniswap V2, released in August 2020, introduced ERC20 to ERC20 swaps, liquidity provision incentives, and oracles. And Uniswap V3 released in May 2021, introduced concentrated liquidity, a nonfungible representation of liquidity positions, and further improvements to the oracle systems. Uniswap adopts the constant product formula for computing the trading price described in Section~\ref{sec:background}. Uniswap is among the most popular AMMs in practice and commands a large market share in the AMM industry. In this section, we provide an overview of the set of contracts that implement the Uniswap functionality on Ethereum, and the execution trace of the supported transactions.

\subsection{Uniswap Supporting Contracts}
Based on the Uniswap documentation~\cite{uniswap-docs} and its reference implementation~\cite{uniswap-ref-impl}, Uniswap is implemented as a set of five contracts: $\PoolDep$, $\PoolFac$, $\NFPM$, $\NFTPD$, and $\SwapRouter$.\medskip

\noindent\textbf{Pool factory and deployer.}
The $\PoolFac$ and $\PoolDep$ contracts are responsible for setting up new token pools. $\PoolDep$ provides the interface, and $\PoolFac$ creates the actual pool. Once a pool is created, clients and LPs can start interacting with it.\medskip

\noindent\textbf{Nonfungible position manager and token descriptor.}
These contracts manage the liquidity positions by handling processes associated with minting, collecting, and burning/adjustment of liquidity positions. The $\NFPM$ contract serves as a "pit stop" for an LP's input tokens, such that the LP deposits input tokens for mint transactions before executing the mint functionality of a particular pool. This intermediate step allows Uniswap to guarantee that the input tokens will actually be delivered by the LP, as they are deposited in the first step, and automatically retrieved from the $\NFPM$ contract by the pool contract when needed. The LP can later retrieve any tokens not used by the mint transaction. These two contracts also implement a unique NFT-based identifier for liquidity positions such that LPs can trade positions amongst themselves.\medskip

\noindent\textbf{Swap router.}
The $\SwapRouter$ contract manages the swapping process. It implements functions such as $\ExIn$ and $\ExOut$ to facilitate specific kinds of swaps. The $\SwapRouter$ also serves as a "pit stop" for input tokens, requiring clients to deposit tokens they want to trade before performing swap transactions.\medskip

There are additional smart contracts deployed in the Uniswap ecosystem, e.g., lens contracts which act as an on-chain oracle to record the price and liquidity history of a given pool. We do not provide further information about such contracts since we focus on the core functionality of Uniswap in our usecase implementation.

\subsection{Transaction Execution Trace}
The core transaction types supported in Uniswap---swaps, mints, burns, collects, and flashes---are executed as follows.\medskip

\noindent\textbf{Swap.} Regardless of the type of swap (exact in/out), clients must first deposit their input tokens in the $\SwapRouter$ contract and approve it to spend their tokens. The client then calls the relevant function of the $\SwapRouter$ contract ($\ExIn$ or $\ExOut$) to submit a swap transaction. If the user is performing an $\ExOut$ swap, they should implement an additional set of conditional transfers to occur after the call to $\ExOut$ to retrieve unspent input tokens. 
    
Internally to either function, the pool's swap function is called ($\SwapTx_{pool}$). $\SwapTx_{pool}$ determines the price of the swap, distributes the liquidity provider fee across the positions whose liquidity is used to fill in the swap, and transfers the output tokens to the client before invoking the $\SwapCallback$ function. $\SwapCallback$ is called to retrieve the needed amount of input tokens from the $\NFPM$ contract. The client's contract can now call any additional transfers to retrieve unspent input tokens (in the case of $\ExOut$).\medskip

\noindent\textbf{Mint.} The user first creates a smart contract capable of receiving ERC721 tokens. This contract must implement the following: a method to receive and store the nonfungible position tokens, and another method to execute the mint. Alternatively. the user may simply forgo the ERC721 receiver contract, allowing their NFT positions to remain as part of the NFTPM contract. Should a user decide to do this, they can simply invoke the same relevant functions of the NFTPM below by using any library which allows interfacing with smart contracts. Mint execution encompasses the following:
\begin{itemize} 
    \item LP transfers their input tokens to the \\$\NFPM$ contract, and authorizes it to spend their tokens when executing the mint. 
    \item LP calls the $\NFPM$ contract's mint method ($\MintTx_{NFPM}$), which creates the NFT position structure, and then calls $\addLiquidity$ to create the liquidity position. 
    \item The $\addLiquidity$ function retrieves the relevant pool from the passed tokens and fee tier. Using the current price ratio, and the desired amount of tokens to be added as liquidity (passed by the user), an applicable liquidity share is calculated using the function $\mathsf{getLiquidityForAmounts}$. Then, the pool's mint function $\MintTx_{pool}$ is called, passing the liquidity value computed by $\mathsf{getLiquidityForAmounts}$. 
    \item $\MintTx_{pool}$ creates a new liquidity position with the liquidity share provided, and outputs the exact amount of the token pair required for the position. This liquidity position contains the amount of liquidity provided by the LP, the price range within which this liquidity can be used for token swaps, and state variables keeping track of the fee changes incurred by swapping the token pairs and the fees owed to the liquidity provider. Then it calls the $\MintCallback$ function, after which the amount of each token to be added to the pool is returned to the $\NFPM$ contract. 
    \item The $\MintCallback$ function verifies that the caller is a valid pool contract, and then transfers the used input tokens from the $\NFPM$ contract to the pool contract.
    \item $\MintTx_{NFPM}$ returns the position NFT, as well as the amount of each token actually added to the pool. 
    \item LP can retrieve any unspent input tokens from the $\NFPM$ contract; this is why the LP needs to implement a method to receive and store an ERC721 token in their contract.
\end{itemize}    

\vspace{3pt}
\noindent\textbf{Collect.} LP calls $\NFPM$'s collect method ($\CollectTx_{NFPM}$) to execute a collect transaction, passing the amount of fees they wish to withdraw along with the nonfungible position token's ID representing their liquidity position. The function $\CollectTx_{NFPM}$ verifies that the transaction issuer is indeed the position owner, and then identifies the target pool based on the token ID. After that, it retrieves the current token amount owed to the owner from the position through the fee calculation process. The latter is an optimization introduced in Uniswap V3 to accommodate for concentrated liquidity positions and to reduce the overall gas usage. Specific details on the calculation process for fees in Uniswap V3 can be found in its whitepaper~\cite{Uniswap-V3-whitepaper}.\medskip

\noindent\textbf{Burn.} To burn a position, the following steps take place:
\begin{itemize}
    \item LP withdraws all tokens owned by that position by first calling $\decreaseLiquidity$ from the $\NFPM$ contract. This function retrieves the relevant pool contract, and calls the pool's Burn function.
    \item The burn function takes the requested amount of tokens to burn, calculates the actual share of liquidity owned by the position which can be burnt (up to the requested amount), and decrements the calculated amount from the positions owned tokens. Finally, it adds the decremented amount to the liquidity positions owed-tokens metric, such that they can be withdrawn by invoking collect.
    \item Once LP have decreased the liquidity owned by their position to zero, they can invoke collect to retrieve those funds before calling $\BurnTx_{NFPM}$.
    \item $\BurnTx_{NFPM}$ checks that the passed liquidity position does not own any shares of liquidity and all owed tokens have been collected. Should these checks pass it deletes the liquidity position and the NFT associated with it.
\end{itemize}

\vspace{3pt}
\noindent\textbf{Flash.} To execute a flash transaction, the client begins by writing and deploying a smart contract which overwrites the $\FlashTx_{callback}$ method of the liquidity pool. The $\FlashTx_{callback}$ function is responsible for paying back the loan. As such, should a client want to perform arbitrage with the loan, they begin by overwriting $\FlashTx_{callback}$, simply adding in solidity code to execute their arbitrage opportunity. They can then call the flash function of the liquidity pool from which they would like to execute a flash transaction. $\FlashTx$ itself simply transfers the requested loan of tokens to the client, where they are used for the arbitrage opportunity, before being re-transferred to the pool, plus the associated fees, by $\FlashTx_{callback}$. Should the arbitrage prove non-profitable, or the contract fail to pay back the flash loan for any reason, the entire transaction is concluded, resulting in the pool never having transferred the loan in the first place. This is possible due to the entire flash process occurring in a single Ethereum transaction. 

\begin{remark}[On NFT-based liquidity positions]
Uniswap V3 introduced an NFT-based approach, using an ERC721 wrapper, to track ownership of liquidity positions. This approach allows for a streamlined process for the verification and transfer of ownership of a position. This can be also adopted in \sysname. At a high level, $\tokenbank$ can be extended to support the NFT approach by utilizing the same implementation found in Uniswap. The caveat though is that creating an NFT will wait until the end of the epoch since it requires mainchain operation (now in \sysname positions can be created immediately and synced back to the mainchain since tracking owenership relies on the LPs' public keys and identifiers are generated at random). Thus, any operations on these new positions has to wait until the next epoch after creating the position NFT.
\end{remark}

\section{Uniswap Traffic Analysis}
\label{appdx:uniswap-traffic-analysis}
In order to find the volume of each transaction type, we used the following query on Dune analytics uniswap\_v3\_ethereum dataset (the following citation is a direct link to the query used~\cite{UniswapTrafficByTxType}). The query retrieves and counts all of the transactions happened since 2019, splitting them by year and transaction type. uniswap\_v3\_ethereum is one of Dune analytics "decoded projects", meaning that it is a dataset formed from the ABI of the smart contracts that operate the protocol in question. Once a user submits a contract for decoding, Dune uses the ABI to generate a table of transactions that is query-able by function call or event. As such, the uniswap\_v3\_ethereum is a set of tables which contain the decoded smart contracts that constitute the Uniswap V3 protocol. Since we are interested in transaction volumes, the above query counts all transactions since 2019 by pulling any row in the tables uniswap\_v3\_ethereum.Pair\_call\_burn, uniswap\_v3\_ethereum.Pair\_call\_collect, etc., for which the \\call\_block\_time is $\geq$ 01/01/2019.\medskip 

\noindent\textbf{Traffic distribution or transaction type frequency.} 
We calculate the frequency of each transaction type by computing the number of transactions of that particular type divided by the total number of transactions from all types. The volume of transactions is gathered by the query above. The frequency shown in Table~\ref{tab:unsiwap-traffic} is calculated for the year 2023.\medskip 

\noindent\textbf{Number of trades per 24 hours by transaction type (volume).}
We calculate this metric by taking the 2023 yearly total transaction count (found by the query above), and then compute the average daily volume of each transaction based on the frequency computed above. The results can be found in Table~\ref{tab:unsiwap-traffic}.\medskip 

\noindent\textbf{Transaction sizes.}    
In order to collect the average size of each transaction type, we implemented a python script to interact with an Ethereum node hosted by chainstack~\cite{Chainstack}. We first collected the transaction hashes of a sufficient amount (approximately 40,000 swaps and 10,000 of all other transaction types) of each transaction type from the Uniswap V3 subgraph~\cite{UniswapV3Subgraph} (this can similarly be done with the dune query provided above, by modifying the select on line 37 to include the transaction hash). Then, we ran a script to analyze the collected transactions. This script basically iterates through the json file which contains our aggregated transaction hashes. For each transaction hash, it performs a web3.eth.get\_raw\_transaction query to retrieve the full raw transaction size. After that, it computes the average size for each transaction type (which is basically sum of total size divided by the number of transactions). The results are also found in Table~\ref{tab:unsiwap-traffic}.

\begin{table}[t!]
\caption{Transaction type breakdown in Uniswap traffic for the year 2023.}
\label{tab:unsiwap-traffic}
\begin{tabular}{ |p{0.16\columnwidth}||p{0.23\columnwidth}|p{0.2\columnwidth}|p{0.2\columnwidth}|}
 \hline
 Transaction Type& Percent of all traffic & Volume per 24 hr & Average Size (B) \\
 \hline
 Swap & 93.19 \% & 52,379 & 1007.83\\
 Mint & 2.14 \% & 1,204 & 814.49\\
 Burn &  2.38 \% & 1,338  & 907.07\\
 Collect & 2.27 \% & 1,275 & 921.80\\
 \hline
\end{tabular}
\end{table}

\section{Additional Evaluations: Impact of Parameter Configuration}
\label{appdx:more-eval}
In this section, we study the impact of parameter configuration on \sysname's performance, including traffic distribution, block size, sidechain round duration, and number of rounds per epoch.

\vspace{3pt}
\noindent\textbf{Impact of block size.} We test the impact of the sidechain block size with the goal of finding an optimal block size for our system. Thus, we compare different deployments of \sysname with different block sizes against Uniswap's Sepolia deployment. We run the protocol with the following block sizes \{$0.5, 1, 1.5, 2$\} MB, and we increase the daily volume to 50M transactions. We measure the impact on throughput and transaction/payout latency with the goal of identifying the block size that maximizes throughput while minimizing latency. Our results can be found in Table~\ref{tab:blocksize-amm}.

\begin{table}[t!]
    \caption{Impact of different sidechain block sizes.}
    \label{tab:blocksize-amm}
    \centering
      \fontsize{9pt}{11pt}\selectfont
    \begin{tabular}{|p{0.35\columnwidth}||p{0.1\columnwidth}|p{0.1\columnwidth}|p{0.1\columnwidth}|p{0.09\columnwidth}|}
    \hline
        \textbf{BlockSize (MB)} & \textbf{0.5} & \textbf{1} & \textbf{1.5} & \textbf{2} \\ 
    \hline
        \textbf{Throughput (tx/s)} &  68.97  & 138.61 & 207.52 & 276.43\\
    \hline
        \textbf{Avg. sc. latency (s)} & 4357.00 & 1603.01 & 687.98 & 230.48\\
    \hline 
        \textbf{Avg. payout latency (s)} &4472.63&  1719.10&  804.05  &  345.44\\
    \hline
    \end{tabular}
\end{table}

As expected, increasing the block size improves both throughput and latency, as more transactions can be packed in a block which reduces queue congestion. However, larger block sizes mean a larger propagation delay that could be problematic for short sidechain round duration. Thus, system designers should be careful when choosing an optimal sidechain block size, balancing between the block size that can handle the daily volume of transactions while capturing the intricacies of large network transfers. 

\vspace{3pt}
\noindent\textbf{Impact of sidechain round duration.} Another important factor to study is the impact of the sidechain round duration. An ideal round duration should allow for consensus to conclude while maximizing throughput and minimizing latency. As PBFT agreement takes on average around 6 sec to conclude in our implementation, we test the following round duration values: 7, 11, 16, and 21 seconds, and report the performance metrics as before (Table~\ref{tab:roundduration-amm}).

\begin{table}[t!]
    \caption{Impact of different sidechain round durations.}
    \label{tab:roundduration-amm}
          \fontsize{7pt}{9pt}\selectfont
        \resizebox{\linewidth}{!}{
    \centering

    \begin{tabular}{|l|c|c|c|c|}
    \hline
        \textbf{Sc round duration (s)} & 7 & 11 & 16 & 21 \\ 
    \hline
        \textbf{Throughput (tx/s)} &  138.06 & 92.18 & 61.75 & 46.31\\
    \hline
        \textbf{Avg. sc latency (s)} & 231.52 & 921.64 & 1950.92 & 2975.90\\
    \hline 
        \textbf{Payout latency (s)} &346.49 & 1087.95& 2193.85 &  3295.11\\
    \hline
    \end{tabular}
}
\end{table}

Throughput-wise, we observe that as the block time increases, throughput decreases and latency increases. This is due to processing the same amount of transactions while increasing the time needed to produce a block. To choose an optimal block time, system designers need to take into consideration the time required for the sidechain consensus and network propagation delays, while aiming to generate new blocks as fast as possible.

\vspace{3pt}
\noindent\textbf{Impact of the number of sidechain rounds per epoch.} We test the impact of the number of sidechain rounds within an epoch. The goal is to find an epoch length that maximizes throughput, and minimizes transaction/payout latency, based on the optimal sidechain round duration from the experiment above. Thus, we pick our epoch to have $\{5, 10, 20, 30, 60, 96 \}$ sidechain rounds, each of which lasts 7 sec, and report the performance metrics (Table~\ref{tab:epochsize-amm}).

Having short epochs negatively affects throughput and the sidechain latency. As a matter of fact, frequent summary-blocks harm performance since this leads to a larger number of $\sync$ calls that are costly. At the same time, fewer transactions are processed within the epoch, thus affecting both latency and throughput. Longer epoch duration reduces the sidechain latency and increases throughput. However, this affects the payout latency adversely since $\sync$ calls now are much fewer and spaced out (so users have to wait longer, as the epoch itself is longer, to obtain their actual token payouts). Also, this means that these users have to put larger deposits to cover their (long) epoch activities, which could be undesirable. Based on our results, we achieve the best payout latency when the epoch lasts for 20 sidechain rounds, which is equivalent to 140 sec.

\vspace{3pt}
\noindent\textbf{Impact of traffic distribution.} In this experiment, we evaluate different traffic distributions as follows (all the numbers are percentages):  $(s,m,b,c) \in \{(60, 20, 10, 10)$, $(60, 10, 20, 10),$ $(60, 10, 10, 20),$ $(80, 10, 5, 5),$ $(80, 5, 10, 5),$ $(80, 5, 5, 10) \}$, where $(s, m, b, c)$ stand for swaps, mints, burns, and collects, respectively. As noted, in these configurations we keep the swap operations dominant to align with the baseline AMM traffic distribution observed in practice (see Appendix~\ref{appdx:uniswap-traffic-analysis}). Our results can be found in Table~\ref{tab:distribution-amm}.

\begin{table}[t!]
    \caption{Impact of number of sidechain rounds per epoch.}
    \label{tab:epochsize-amm}
    \resizebox{\linewidth}{!}{
    \centering
     \fontsize{9pt}{11pt}\selectfont
    \begin{tabular}{|l|c|c|c|c|c|c|}
    \hline
        \textbf{Epoch len} & \multirow{2}{*}{5} & \multirow{2}{*}{10} & \multirow{2}{*}{20} & \multirow{2}{*}{30} & \multirow{2}{*}{60} & \multirow{2}{*}{96} \\ 
        \textbf{(sc rounds)}                      &  & & & & &  \\
    \hline
        \textbf{Throughput} &  \multirow{2}{*}{114.27} & \multirow{2}{*}{128.53} & \multirow{2}{*}{135.90} & \multirow{2}{*}{138.06} & \multirow{2}{*}{140.66} & \multirow{2}{*}{141.53}\\
           \textbf{(tx/s)}                      &  & & & & &  \\      
    \hline
        \textbf{SC latency (s)}  & 517.94 & 333.54 & 255.57 & 231.52 & 208.96 & 199.55 \\

    \hline 
        \textbf{Payout} & \multirow{2}{*}{545.12} & \multirow{2}{*}{337.86} & \multirow{2}{*}{334.81} & \multirow{2}{*}{346.49} & \multirow{2}{*}{434.94}  & \multirow{2}{*}{546.04}\\
        \textbf{latency (s)}                      &  & & & & &  \\
    \hline
    \end{tabular}
    }
\end{table}

\begin{table}[t!]
    \caption{Impact of traffic distribution.}
    \label{tab:distribution-amm}
    \centering
    \resizebox{\linewidth}{!}{
    \begin{tabular}{|l|c|c|c|c|c|c|}
    \hline 
        \textbf{Swap \%} & \multicolumn{3}{|c|}{60} & \multicolumn{3}{|c|}{80} \\
    \hline
        \textbf{Mint \%} & 20 & 10 & 10 & 10 &  5& 5 \\ 
        \textbf{Burn \%}              &      10  & 20  & 10  &  5 &  10 & 5  \\
        \textbf{Collect \%}            &    10      & 10  & 20 & 5 &  5 &  10 \\
    \hline
        \textbf{Throughput} & \multirow{2}{*}{145.16}  & \multirow{2}{*}{143.76} & \multirow{2}{*}{140.91} & \multirow{2}{*}{143.76} & \multirow{2}{*}{140.23} & \multirow{2}{*}{140.14} \\
         \textbf{(tx/s)}                      &  & & & & &  \\    
    \hline
    \textbf{SC latency (s)} & 162.26 & 175.35&  177.39 & 202.48 & 215.06 & 210.35 \\  
    \hline 
        \textbf{Payout} & \multirow{2}{*}{277.99}& \multirow{2}{*}{291.05} & \multirow{2}{*}{293.03} & \multirow{2}{*}{317.23} & \multirow{2}{*}{329.81} & \multirow{2}{*}{324.43} \\
        \textbf{latency (s)}                      &  & & & & &  \\     
    \hline
     \textbf{Max sc} &  \multirow{2}{*}{31831} & \multirow{2}{*}{31831} & \multirow{2}{*}{31831}  & \multirow{2}{*}{31831}  & \multirow{2}{*}{31831}   & \multirow{2}{*}{31831}  \\
    \textbf{growth (B)}                      &  & & & & &  \\     
    \hline
    \end{tabular}
}
\end{table}

When varying the traffic distribution, the metrics we report remain similar. This is because transaction sizes are very close, this yields blocks containing approximately the same number of transactions, regardless of the transaction distribution. As for the maximum chain growth, it is bounded by the number of users participating during an epoch and the number of positions they create. Thus, it remains invariant even with a variation of transaction distributions since the number of users is the same.

\vspace{3pt}
\noindent\textbf{Impact of sidechain committee size.} In this experiment, we evaluate the overhead of consensus with committee sizes $S_c \in \{100, 250, 500, 750, 1000 \}$. We conduct our measurements over 10 sidechain rounds.

\begin{table}[h!]
    \caption{Impact of the committee size on consensus}
    \label{tab:com-size}
    \centering
    \begin{tabular}{|l|c|c|c|c|c|}
    \hline
        $S_c$ & 100 & 250 & 500 & 750& 1000   \\
    \hline
   Agreement time (s)   & 0.99 & 2.95  & 6.51 & 14.32 &  22.24 \\
   \hline
    \end{tabular}

\end{table}

A committee size impacts the duration needed to finalize agreement on a block due to the increased communication overhead. Our results in Table~\ref{tab:com-size} confirms this trend. In turn, this means that system designers should account for this increased agreement duration when setting the round duration for the sidechain, e.g., with $S_c = 1000$ a round should last at least for around 23 s and so on.

\end{document}